\documentclass[reprint,
superscriptaddress,
amsmath,amssymb,
aps,
pra,
floatfix
]{revtex4-2}
\usepackage{pgfplots}
\pgfplotsset{compat=newest}
\usepgfplotslibrary{fillbetween}
\usepackage{tikz}
\usepackage{tikz-3dplot}
\usepackage{quantikz}
\usepackage{xcolor}
\usepackage{graphicx}
\usepackage{grffile} 
\usepackage{braket}
\usepackage{hyperref}
\usepackage{color}
\usepackage{mathrsfs}
\usepackage{booktabs}
\usepackage{multirow}
\usepackage{amsthm}
\usepackage{tabularx}
\newtheorem{theorem}{Theorem}
\newtheorem{proposition}[theorem]{Proposition}
\newtheorem{lemma}[theorem]{Lemma}
\newtheorem{corollary}[theorem]{Corollary}

\begin{document}

\title{Detecting Initial System–Environment Correlations from a Single Observable: Theory, Performance, and Applications}

\author{Ali  Abu-Nada}
\affiliation{Sharjah Maritime Academy, United Arab Emirates}
\author{Russell R. Ceballos}
\affiliation{Department of Physical Sciences, Olive-Harvey College, City Colleges of Chicago, 10001 S Woodlawn Ave, Chicago, IL 60628, USA}
\affiliation{QuSTEAM Initiative, 510 Ronalds St., Iowa City, IA 52245, USA}
\author{Lian-Ao Wu}
\affiliation{Department of Physics, University of the Basque Country UPV/EHU, 48080 Bilbao, Spain}
\affiliation{EHU Quantum Center, University of the Basque Country UPV/EHU, Leioa, Biscay 48940, Spain}
\date{\today}
\begin{abstract}
Initial correlations between a quantum system and its environment can strongly influence open-system dynamics and invalidate standard reduced-dynamics descriptions. Detecting such correlations is therefore an important problem, but most existing approaches require state tomography, multiple system preparations, or some degree of access to the environment.

Here we show that initial system--environment correlations can be certified using only a single calibrated system observable when the interaction is known. For a qubit interacting with an environment via isotropic Heisenberg exchange, we derive exact bounds on the signal $z(t)=\langle\sigma_z^S\rangle(t)$ that hold for all factorized initial states sharing the same calibrated reduced system state. These bounds define a \emph{factorized envelope}: if an observed trajectory exits this envelope, initial correlations are unambiguously certified.

The witness requires only single-axis measurements after a one-time calibration of $\rho_S(0)$ and does not rely on environment access or state tomography. We illustrate the method for several families of correlated initial states and show that the same single-observable logic extends to an exactly solvable pure-dephasing spin--boson model with an infinite environment.

\end{abstract}
\maketitle

\noindent\textbf{Keywords:}
Open quantum systems; initial system--environment correlations;
Heisenberg exchange interaction; quantum correlations;
correlation witness; single-observable detection.

\section{Introduction}\label{sec1}

A controllable quantum system is never perfectly isolated. In realistic settings it interacts with surrounding degrees of freedom that are not directly accessible and are therefore treated as an environment~\cite{BreuerPetruccione,AlickiLendi}. This coupling gives rise to open-system dynamics, whose behavior may depend not only on noise generated during the evolution, but also on correlations that are already present between the system and the environment at the initial time~\cite{Pechukas1994,Alicki1995,ShajiSudarshan2005,Laine2010,Modi2012}. Such initial system--environment correlations are of fundamental interest because they can invalidate the standard assumption that reduced dynamics are generated by a completely positive map defined on the full system state space~\cite{Pechukas1994,Alicki1995,ShajiSudarshan2005}. They are also operationally important because they can mimic or mask genuine memory effects: for example, revivals in state distinguishability may originate from correlations already present at $t=0$ rather than from information backflow generated during the subsequent evolution~\cite{Laine2010,Modi2012}.

At the same time, initial system--environment correlations are not merely a nuisance. When deliberately engineered, they can serve as a resource and modify the effective action of environmental noise in useful ways. In particular, appropriately structured initial system--environment entanglement can in some settings suppress decoherence or improve the robustness of quantum information processing (see, e.g., Ref.~\cite{XieWangWu2025}). Whether detrimental or beneficial, however, such correlations must first be identified. This makes the detection of initial system--environment correlations a central problem in the theory and practice of open quantum systems.

If both the system and the environment were experimentally accessible, one could in principle detect correlations by joint measurements or full tomography of the combined state~\cite{Modi2012,Horodecki2009}. In most physical platforms, however, direct access to environmental degrees of freedom is either limited or completely absent. This raises a basic and practically important question:

\begin{quote}
\emph{To what extent can initial system--environment correlations be detected using measurements on the system alone?}
\end{quote}

A number of system-only strategies have been developed. Some methods compare the dynamics generated by multiple initial system preparations and infer correlations by testing whether the effective environment state depends on the chosen preparation~\cite{Modi2012,Ringbauer2015}. Others use trace distance or related distinguishability measures between reduced system states, where an increase in distinguishability can witness either initial correlations or non-Markovian behavior~\cite{Laine2010,Smirne2010,Gessner2011,Dajka2011,Wissmann2013,Li2011,Gessner2014}. These approaches are powerful and conceptually important, but they typically require either many initial preparations or some form of reduced-state tomography. Even when the environment is never measured, the experimentalist often still needs to reconstruct the full Bloch vector of the system by measuring several noncommuting observables.

A complementary line of work assumes prior knowledge of the joint dynamics. When the interaction Hamiltonian is known, one can ask whether an observed reduced evolution is compatible with any factorized initial state of the form $\rho_S(0)\otimes\rho_E(0)$~\cite{Chitambar2015,ByrdChitambar2016}. Recent theoretical results have shown that, under broad conditions, initial correlations are in principle always detectable from system-only data provided the joint dynamics are known and suitable measurements are performed~\cite{Sargolzahi2024}. Experimentally, system-only detection of initial entanglement with an inaccessible environment has also been demonstrated without full two-qubit tomography, by exploiting prior knowledge of the interaction model~\cite{Zhu2022}. These results strongly support the view that initial correlations can be detected without direct environment access.

Despite this progress, most existing protocols still rely on reconstructing the reduced state of the system, or at least on measuring more than one observable. In many experimentally relevant settings---including NMR platforms, trapped ions, superconducting circuits, and NV centers---this tomographic overhead is nontrivial~\cite{Laine2010,Smirne2010,Li2011,Modi2012,Horodecki2009,Amico2008,VandersypenChuang2005}. This motivates a sharper question than the one usually asked:

\begin{quote}
\emph{Can initial system--environment correlations be certified using only a single expectation value of a single system observable, measured along one fixed axis?}
\end{quote}

In this work we answer this question affirmatively for a paradigmatic and exactly solvable model. We consider a qubit system interacting with a qubit environment through the isotropic Heisenberg exchange Hamiltonian and assume that the initial reduced state $\rho_S(0)$ is calibrated once at the beginning of the protocol. After this one-time calibration, no further tomography is performed. Instead, we monitor only the single signal $
z(t)=\langle\sigma_z^S\rangle(t)$,
obtained from repeated measurements of $\sigma_z^S$ at different evolution times. We ask which trajectories of this single observable are compatible with \emph{all} factorized initial states that share the same calibrated reduced state.

Our central result is that, for the isotropic Heisenberg exchange, all such factorized initial states generate a closed and exactly computable time-dependent interval $
\mathcal{R}(t)=[z_{\min}(t),z_{\max}(t)]$,
which we call the \emph{factorized envelope}. This envelope contains every trajectory $z(t)$ obtainable from an uncorrelated preparation consistent with the same $\rho_S(0)$. Consequently, if an experimentally observed trajectory leaves this envelope at any time, then no product initial state can explain the data, and initial system--environment correlations are certified. The witness is one-sided, analytically exact, and experimentally economical: after the initial calibration, it requires only repeated measurements of a single observable along a fixed axis.\\
It is important to emphasize, however, that the witness is inherently one-sided. While any violation of the factorized envelope provides an unambiguous certification of initial system--environment correlations, the converse does not hold. A trajectory that remains entirely inside the envelope is still compatible with correlated initial states and therefore cannot be interpreted as evidence that the initial state was uncorrelated.\\
It is important to emphasize that the proposed witness is inherently
one-sided. If an experimentally observed trajectory leaves the
factorized envelope, then no factorized initial state sharing the same
calibrated reduced system state can reproduce the observation, and
initial system--environment correlations are therefore certified.
Conversely, a trajectory that remains entirely within the envelope does
not establish that the initial state was factorized, since correlated
initial states may still generate trajectories compatible with the
factorized null hypothesis.\\
From the perspective of reduced dynamics, the construction has a natural operational interpretation. When initial correlations may be present, the reduced evolution depends on how the reduced input state is embedded into the joint system--environment space, a dependence commonly described in terms of assignment (embedding) maps~\cite{Pechukas1994,Alicki1995,ShajiSudarshan2005}. In our setting, the standard product assignment
$\mathcal{A}_{\mathrm{prod}}(\rho_S)=\rho_S\otimes\rho_E$
serves as a null hypothesis for uncorrelated preparations. The factorized envelope is then precisely the admissible region generated by this product embedding. A measured violation of the envelope therefore rules out the entire product-assignment class using only a single calibrated observable, without reconstructing $\rho_S(t)$ at later times and without inferring a general assignment map from data.\\
The isotropic Heisenberg exchange model is particularly suitable for this program because it admits both an exact analytic treatment and a transparent geometric interpretation. For factorized initial states, the reduced Bloch vector follows a structured partial-swap trajectory between the system and environment Bloch vectors. This makes it possible to characterize the admissible single-observable dynamics in closed form and to visualize the witness geometrically. We show that the resulting envelope can be violated by several families of correlated initial states, including cases in which the reduced system state is maximally mixed and therefore carries no direct signature of the hidden correlations at $t=0$.\\
Although the isotropic Heisenberg exchange interaction serves as our primary example, the proposed framework is not conceptually restricted to this specific model. The Heisenberg Hamiltonian was chosen because it is a paradigmatic interaction in quantum information science and condensed-matter physics and, importantly, admits an exact analytical characterization of the factorized compatibility envelope. The essential ingredient of our approach is the availability of a sufficiently characterized interaction Hamiltonian that allows the admissible single-observable dynamics generated by all factorized initial states to be determined. Consequently, the same compatibility-based philosophy may be extended to broader classes of system--environment interactions whenever analogous analytical or numerical characterizations are available. Investigating such extensions, including anisotropic exchange models and other exactly solvable open-system Hamiltonians, represents an important direction for future work.
\\
Although our main development uses a minimal two-qubit model, the underlying idea is more general. To demonstrate that the witness is not an artifact of a finite-dimensional environment, we also extend the same single-observable logic to an exactly solvable pure-dephasing spin--boson model with an infinite bath. In that setting, factorized initial states again generate a sharply constrained admissible region for a single system observable, and violations of that region certify initial correlations.\\
The present protocol assumes that the system--environment interaction Hamiltonian has been independently characterized through prior experimental calibration. We emphasize that this assumption defines the intended scope of the method rather than a universal description of all open quantum systems. While it may be restrictive for complex molecular or condensed-matter environments, where microscopic interactions are often only approximately known, it is well justified for many engineered quantum platforms—including superconducting circuits, trapped ions, semiconductor spin qubits, and NV centers—where effective interaction Hamiltonians are routinely identified and calibrated before experiments are performed. Extending the present compatibility-based framework to partially characterized or uncertain interaction models remains an important direction for future research.
\\
The present work differs from previous tomography-free and
reduced-resource approaches in both its null hypothesis and its
measurement requirements. Existing system-only methods commonly infer
initial correlations by comparing several reduced trajectories, preparing
multiple initial system states, monitoring distinguishability measures, or
reconstructing part or all of the reduced density operator. In contrast,
we fix a single calibrated reduced system state and determine the complete
range of one fixed observable that can be generated by every factorized
initial state compatible with that calibration. The resulting
factorized-envelope construction therefore provides an exact
single-observable compatibility test rather than a reconstruction-based
criterion.

The novelty of the present work lies in the combination of several
elements: an analytically exact factorized envelope, optimization over all
unknown environment qubit states, measurements along only one fixed axis
after a one-time calibration, a geometric partial-SWAP interpretation, and
an assignment-map formulation in which the product embedding serves as the
factorized null hypothesis. Reference~\cite{HagenByrd2021}, whose state
families are used in Sec.~\ref{sec:examples}, did not derive the
single-observable factorized envelope or the associated certification
theorem developed here. We additionally examine how the same
compatibility-based reasoning may be implemented in a characterized
infinite-dimensional pure-dephasing environment. More broadly, methods originating from open quantum-system theory and
quantum-inspired dynamical models have also found applications in the
description of complex systems beyond traditional physics, including
network dynamics and social processes~\cite{Alodjants2023}.\\
The main contributions of this work are therefore fourfold.
First, we derive an analytically exact factorized compatibility envelope
for a single system observable under isotropic Heisenberg exchange by
optimizing over all environment qubit states. Second, after a one-time
calibration of the reduced initial system state, the protocol requires
measurements along only one fixed system axis, without reduced-state
tomography during the evolution or direct access to the environment.
Third, we provide both a geometric partial-SWAP interpretation and an
assignment-map formulation of the factorized null hypothesis. Fourth, we
investigate how the same compatibility principle can be applied to a
characterized pure-dephasing spin--boson model with an
infinite-dimensional environment.\\
In contrast to previous tomography-free approaches, the present work derives an exact factorized compatibility envelope for a single calibrated observable, provides a closed-form analytical construction under isotropic Heisenberg exchange, develops a geometric interpretation of the witness, formulates the protocol within the assignment-map framework, and demonstrates its extension to an exactly solvable infinite-dimensional model.\\
The remainder of the paper is organized as follows.
Section~\ref{sec2} describes the operational protocol.
Section~\ref{sec3} introduces the model and notation.
Section~\ref{sec4} derives the factorized envelope.
Section~\ref{sec:examples} presents illustrative examples.
Section~\ref{sec:dephasing} extends the approach to the pure-dephasing spin--boson model.
Finally, we conclude with a discussion.

\section{Procedure}\label{sec2}

In this section we describe the protocol in an operational manner, emphasizing how it is implemented in practice and how the resulting data are interpreted.
We clearly separate what is done \emph{in the laboratory} from what is done 
\emph{in theory} to interpret the experimental data.

\subsection*{What is the basic idea?}

When a quantum system interacts with an environment, correlations can appear in two different ways:
\begin{itemize}
\item correlations can be \emph{created during the evolution}, even if the system and environment were initially uncorrelated;
\item correlations can already \emph{exist at the initial time}, before the evolution begins.
\end{itemize}

In this work we focus on the second case:  
\emph{how can we tell whether correlations were already present at the initial time, using measurements on the system only?}

We assume that the interaction Hamiltonian $H$ is known, and that the initial state of the system $\rho_S(0)$ can be calibrated once. After that, we only measure a single observable of the system, $
z(t) = \langle \sigma_z^S \rangle(t)$.

The key question is therefore:
\begin{quote}
\emph{Given the calibrated system state $\rho_S(0)$ and a known interaction $H$, which values of $z(t)$ are possible if the initial state was uncorrelated?}
\end{quote}

If we can determine this range, then any experimentally observed value outside it proves that the initial state must have been correlated.

For the isotropic Heisenberg exchange, we show that all uncorrelated (product) initial states lead to values of $z(t)$ inside a time-dependent interval,
\begin{equation}
\mathcal{R}(t) = [z_{\min}(t),\, z_{\max}(t)],
\end{equation}
which we call the \emph{factorized envelope}.

\vspace{2mm}
\subsection*{Experimental protocol (Steps 1-3)}

The first three steps describe only what the experimentalist does in the laboratory.  
At this stage, no assumption is made about whether the system and environment are correlated.

\vspace{1mm}
\noindent\textbf{Step 1: Calibrate the system state.}

Prepare the system qubit $S$ in a known reduced state,
\begin{equation}
\rho_S(0) = \tfrac{1}{2}\bigl(\mathbb{I} + \vec{s}\cdot\vec{\sigma}\bigr),
\end{equation}
where $\vec{s}=(s_x,s_y,s_z)$ is the Bloch vector.  
The vector $\vec{s}$ is determined in a separate calibration run, for example by performing tomography at time $t=0$.

This calibration is done only once. Afterward, no further tomography is required.
The environment $E$ starts in some unknown state $\rho_E(0)$, which may or may not be correlated with the system.

\vspace{1mm}
\noindent\textbf{Step 2: Let the system and environment interact.}

In each run of the experiment, the joint system starts in some physical state $\rho_{SE}(0)$ whose system marginal is the calibrated $\rho_S(0)$.  
The joint evolution is governed by the known Hamiltonian $H$,
\begin{equation}
\rho_{SE}(t) = U(t)\,\rho_{SE}(0)\,U^\dagger(t),
\qquad U(t)=e^{-iHt}.
\end{equation}

Operationally, the experimentalist simply prepares the same initial state as reproducibly as possible and allows the system and environment to interact for a time $t$.

\vspace{1mm}
\noindent\textbf{Step 3: Measure a single observable.}

For a chosen evolution time $t$, the experiment is repeated many times:
\begin{enumerate}
\item prepare the same initial joint state;
\item let it evolve for time $t$;
\item measure $\sigma_z^S$ on the system.
\end{enumerate}

Averaging the measurement outcomes gives
\begin{equation}
z(t) = \langle \sigma_z^S \rangle(t)
= \mathrm{Tr}_S[\rho_S(t)\,\sigma_z^S],
\end{equation}
where $\rho_S(t)=\mathrm{Tr}_E[\rho_{SE}(t)]$.
Only this single observable is measured, and the measurement axis is fixed throughout the experiment.

\vspace{2mm}
\subsection*{Theoretical analysis (Steps 4-5)}

We now explain how the experimental data are interpreted.

\vspace{1mm}
\noindent\textbf{Step 4: Define the uncorrelated reference model.}

To test whether the data are compatible with an uncorrelated preparation, we introduce a \emph{theoretical null hypothesis}.  
We assume that the initial joint state was uncorrelated and given by $
\rho_{SE}(0)=\rho_S(0)\otimes\rho_E(0)$,
where $\rho_E(0)$ is arbitrary.

This assumption corresponds to the standard \emph{product assignment map}, which embeds the calibrated system state into a joint state assuming no initial correlations.

Under this assumption, different environment states lead to different trajectories $z(t)$.  
The set of all values obtainable in this way defines the factorized envelope,
\begin{equation}
\mathcal{R}(t)
= \Bigl\{ z(t)\ \Big|\ \rho_{SE}(0)=\rho_S(0)\otimes\rho_E(0) \Bigr\}
= [z_{\min}(t), z_{\max}(t)].
\end{equation}

For the isotropic Heisenberg exchange, the bounds $z_{\min}(t)$ and $z_{\max}(t)$ can be computed analytically, as shown in Sec.~\ref{sec4}.

\vspace{1mm}
\noindent\textbf{Step 5: Compare theory and experiment.}

Finally, the experimentally measured value $\tilde z(t)$ is compared with the theoretical interval $\mathcal{R}(t)$.  
If for some time $t^*$ one finds
\[
\tilde z(t^*)
\notin
\mathcal R(t^*)
\]
then no uncorrelated (product) initial state can explain the observation.  
This certifies the presence of initial system-environment correlations.

\vspace{2mm}

Figure~\ref{fig:witness} provides a schematic illustration of the factorized envelope and how an experimentally observed trajectory can certify initial correlations by exiting this admissible region.

\begin{figure}[t]
    \centering
    \includegraphics[width=1.0\linewidth]{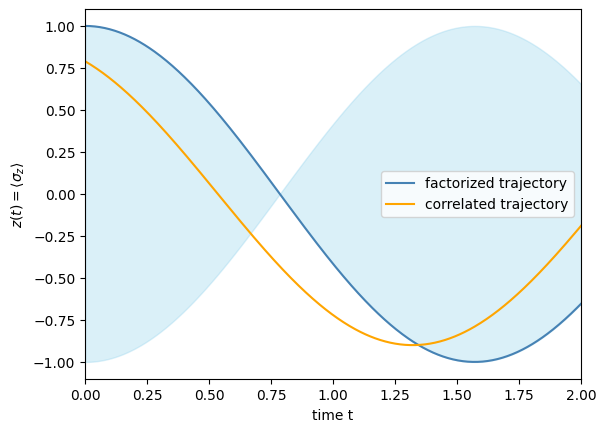}
    \caption{Illustration of the factorized-envelope witness. The shaded region represents the factorized envelope $\mathcal{R}(t)=[z_{\min}(t),z_{\max}(t)]$, containing all trajectories $z(t)=\langle\sigma_z^S\rangle(t)$ obtainable from factorized initial states $\rho_{SE}(0)=\rho_S(0)\otimes\rho_E(0)$. The blue curve shows a representative factorized trajectory lying entirely inside the envelope, while the orange curve corresponds to a correlated initial state that exits the envelope, thereby certifying initial system–environment correlations.}
    \label{fig:witness}
\end{figure}

\section{Model and Definitions}\label{sec3}

We now formalize the setting and notation.

We consider a system qubit $S$ and an environment qubit $E$ with joint Hilbert space
\begin{equation}\label{eq11}
\mathcal{H}_{SE} = \mathcal{H}_S \otimes \mathcal{H}_E,
\qquad \dim\mathcal{H}_S = \dim\mathcal{H}_E = 2.
\end{equation}
The initial joint state $\rho_{SE}(0)$ may be either
\begin{equation}\label{eq12}
\rho_{SE}(0) = \rho_S(0)\otimes\rho_E(0)
\quad\text{(factorized)},
\end{equation}
or
\begin{equation}\label{eq13}
\rho_{SE}(0) \neq \rho_S(0)\otimes\rho_E(0)
\quad\text{(correlated)},
\end{equation}
where the latter may contain classical and/or quantum correlations, including entanglement.

The joint evolution is generated by a known Hamiltonian $H$ on $\mathcal{H}_{SE}$,
with unitary $U(t) = e^{-iHt}$ and
\begin{equation}
\rho_{SE}(t) = U(t)\rho_{SE}(0)U^\dagger(t).
\end{equation}
The reduced state of the system is
\begin{equation}
\rho_S(t) = \mathrm{Tr}_E[\rho_{SE}(t)].
\end{equation}

The calibrated initial state of $S$ is written in Bloch form as
\begin{equation}\label{eq14}
\rho_S(0) = \tfrac{1}{2}\bigl(\mathbb{I} + \vec{s}\cdot\vec{\sigma}\bigr),
\qquad |\vec{s}|\le 1,
\end{equation}
where $\vec{\sigma}=(\sigma_x,\sigma_y,\sigma_z)$ is the Pauli vector and
$\vec{s}\in\mathbb{R}^3$ is the initial Bloch vector. At time $t$ the reduced state is
\begin{equation}\label{eq15}
\rho_S(t) = \tfrac{1}{2}\bigl(\mathbb{I} + \vec{s}(t)\cdot\vec{\sigma}\bigr),
\qquad |\vec{s}(t)|\le 1,
\end{equation}
with Bloch vector $\vec{s}(t)=(s_x(t),s_y(t),s_z(t))$.

The quantity accessible in our protocol is the single expectation value.\\
The measured signal can be written equivalently in the Schr\"odinger and
Heisenberg pictures as
\begin{align}
z(t)
&=
\left\langle \sigma_z^S \right\rangle_t
\nonumber\\
&=
\operatorname{Tr}_S
\left[
\rho_S(t)\sigma_z^S
\right]
\nonumber\\
&=
\operatorname{Tr}_{SE}
\left[
\rho_{SE}(0)
U^\dagger(t)
\left(
\sigma_z^S\otimes\mathbb I_E
\right)
U(t)
\right].
\label{eq:z-signal}
\end{align}
Equivalently, defining the Heisenberg-picture joint observable
\begin{equation}
\sigma_z^S(t)
=
U^\dagger(t)
\left(
\sigma_z^S\otimes\mathbb I_E
\right)
U(t),
\label{eq:heisenberg-observable}
\end{equation}
we have
\begin{equation}
z(t)
=
\operatorname{Tr}_{SE}
\left[
\rho_{SE}(0)\sigma_z^S(t)
\right].
\label{eq:z-heisenberg}
\end{equation}
When the initial state is factorized, we may write the environment as
\begin{equation}\label{eq17}
\rho_E(0) = \tfrac{1}{2}\bigl(\mathbb{I} + \vec{e}\cdot\vec{\sigma}\bigr),
\qquad |\vec{e}|\le 1,
\end{equation}
but in the protocol we never need to know $\vec{e}$ explicitly.

For each fixed $t$ we define the set of all values of $z(t)$ obtainable from factorized initial states with the given calibration:
\begin{equation}\label{eq18}
\mathcal{R}(t)
= \bigl\{\, z(t)\ \big|\ \rho_{SE}(0)=\rho_S(0)\otimes\rho_E(0)\,\bigr\}.
\end{equation}
Our main technical goal is to compute $\mathcal{R}(t)$ explicitly for the isotropic exchange Hamiltonian and to show that it is always an interval
\begin{equation}\label{eq19}
\mathcal{R}(t) = [\,z_{\min}(t), z_{\max}(t)\,].
\end{equation}
Explicit formulas for $z_{\min}(t)$ and $z_{\max}(t)$ are obtained in Sec.~\ref{sec4}.

\subsection{Reduced-dynamics viewpoint: assignment maps and the product admissible set}
\label{sec:assignmentmaps}

A reduced system trajectory is ultimately generated by a joint unitary evolution on $SE$ followed by a partial trace. When initial correlations may be present, a key foundational issue is that the reduced evolution cannot, in general, be represented by a single completely positive (CP) map on the full system state space without additional physical structure; instead, one must specify how a given reduced state $\rho_S(0)$ is embedded into the joint space. This is naturally captured by an \emph{assignment map} (or embedding) $\mathcal{A}$, which assigns to each reduced input state a compatible joint state,
\begin{equation}
\rho_{SE}(0)=\mathcal{A}\!\big(\rho_S(0)\big),
\label{eq:assignment}
\end{equation}
and generates the reduced dynamics
\begin{equation}
\rho_S(t)=\mathrm{Tr}_E\!\left[U(t)\,\mathcal{A}\!\big(\rho_S(0)\big)\,U^\dagger(t)\right].
\label{eq:reduced_from_assignment}
\end{equation}
In general, $\mathcal{A}$ need not be unique, and physical consistency requirements may restrict the \emph{domain} on which a given reduced description is valid. A useful structural condition is \emph{$G$-consistency}: for a specified set of allowed global unitaries $G$, different joint states that share the same reduced state must induce the same reduced evolution under all $U\in G$, ensuring that the reduced dynamics are well defined on the chosen domain (see, e.g., Ref.~\cite{CeballosThesis2017}).

In our protocol, we do \emph{not} attempt to reconstruct $\rho_S(t)$ or infer $\mathcal{A}$ in full generality. Instead, we fix the reduced input state $\rho_S(0)$ once by calibration and monitor only the single expectation value
\begin{equation}
z(t)=\mathrm{Tr}_S\!\big[\rho_S(t)\,\sigma_z^S\big]
=\mathrm{Tr}_{SE}\!\big[\mathcal{A}\!\big(\rho_S(0)\big)\,U^\dagger(t)(\sigma_z^S\!\otimes\!\mathbb{I})U(t)\big].
\label{eq:z_from_assignment}
\end{equation}
This viewpoint makes clear that even without full tomography, one can define a sharp \emph{compatibility set} for factorized preparations by restricting to the standard product assignment
\begin{equation}
\mathcal{A}_{\rm prod}\!\big(\rho_S(0)\big)=\rho_S(0)\otimes\rho_E(0),
\label{eq:product_assignment}
\end{equation}
with arbitrary environment state $\rho_E(0)$. For fixed $\rho_S(0)$ and $t$, the set of all values of $z(t)$ obtainable from \eqref{eq:product_assignment} is exactly the factorized admissible region,
\begin{equation}
\mathcal{R}(t)=\Big\{\,z(t)\ \Big|\ \rho_{SE}(0)=\rho_S(0)\otimes\rho_E(0)\,\Big\}.
\label{eq:R_def_again}
\end{equation}
Because $z(t)$ is affine in the unknown environment parameters under the product assignment, $\mathcal{R}(t)$ is an interval $[z_{\min}(t),z_{\max}(t)]$ whose endpoints are achieved by pure environment states. If an observed trajectory leaves this interval at any time, then no product assignment \eqref{eq:product_assignment} can explain the data, and the initial joint state must have contained correlations. In Sec.~\ref{sec4} we compute $z_{\min/\max}(t)$ analytically for the isotropic Heisenberg exchange.

\section{Heisenberg Exchange Dynamics and Envelope}
\label{sec4}

We now specialize to the isotropic Heisenberg exchange Hamiltonian
\begin{equation}\label{eq20}
H_{\mathrm{ex}}
= J\bigl(\sigma_x^S\sigma_x^E + \sigma_y^S\sigma_y^E + \sigma_z^S\sigma_z^E\bigr)
= J\,\vec{\sigma}_S\cdot\vec{\sigma}_E,
\end{equation}
with factorized initial state
\begin{equation}\label{eq21}
\rho_{SE}(0)=\rho_S\otimes\rho_E,
\quad
\rho_S=\tfrac{1}{2}(\mathbb{I}+\vec{s}\cdot\vec{\sigma}),\quad
\rho_E=\tfrac{1}{2}(\mathbb{I}+\vec{e}\cdot\vec{\sigma}).
\end{equation}

\begin{proposition}
\label{prop:exchange-bloch}
Let the system and environment be two qubits evolving under the isotropic
exchange Hamiltonian
\begin{equation}
H_{\mathrm{ex}}
=
J\,\vec{\sigma}_S\cdot\vec{\sigma}_E .
\end{equation}
Assume that the initial state is factorized,
\begin{equation}
\rho_{SE}(0)=\rho_S\otimes\rho_E,
\end{equation}
where
\begin{equation}
\rho_S=\frac12(\mathbb I+\vec s\cdot\vec\sigma),
\qquad
\rho_E=\frac12(\mathbb I+\vec e\cdot\vec\sigma).
\end{equation}
Then the reduced system state at time \(t\) is
\begin{equation}
\rho_S(t)
=
\frac12
\left[
\mathbb I+\vec s(t)\cdot\vec\sigma
\right],
\end{equation}
with Bloch vector
\begin{equation}
\boxed{
\vec s(t)
=
c^2\vec s
+
d^2\vec e
-
cd\,(\vec s\times\vec e)
}
\label{eq:exchange-bloch-evolution}
\end{equation}
where
\begin{equation}
c=\cos(2Jt),
\qquad
d=\sin(2Jt).
\end{equation}
\end{proposition}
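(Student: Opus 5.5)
The plan is to exploit the fact that the isotropic exchange operator is an affine function of the two-qubit SWAP operator $W$, defined by $W(|a\rangle\otimes|b\rangle)=|b\rangle\otimes|a\rangle$. The first step is to invoke the standard identity
\begin{equation}
\vec\sigma_S\cdot\vec\sigma_E = 2W-\mathbb I ,
\end{equation}
which I would check on the triplet and singlet subspaces. There $\vec\sigma_S\cdot\vec\sigma_E$ takes the values $+1$ and $-3$, while $W$ takes the values $+1$ and $-1$. Then $H_{\mathrm{ex}}=J(2W-\mathbb I)$. Since $W^2=\mathbb I$, the exponential is elementary:
\begin{equation}
U(t)=e^{iJt}\bigl(c\,\mathbb I-i d\,W\bigr),\qquad c=\cos(2Jt),\ d=\sin(2Jt).
\end{equation}
The global phase $e^{iJt}$ drops out of $\rho_{SE}(t)=U\rho_{SE}(0)U^\dagger$. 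This already exhibits the dynamics as a partial SWAP.

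Next I expand the conjugation with $\rho\equiv\rho_S\otimes\rho_E$:
\begin{equation}
\rho_{SE}(t)=c^2\rho+d^2\,W\rho W+icd\,(\rho W-W\rho).
\end{equation}
I then take $\mathrm{Tr}_E$ term by term. The first term gives $c^2\rho_S$. For the second, $W\rho W=\rho_E\otimes\rho_S$, so it gives $d^2\rho_E$. For the cross terms I use the partial-trace identities
\begin{equation}
\mathrm{Tr}_E[W(A\otimes B)]=BA,\qquad \mathrm{Tr}_E[(A\otimes B)W]=AB.
\end{equation}
I would verify these on matrix units $A=|i\rangle\langle j|$ and $B=|k\rangle\langle l|$. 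The cross term then becomes $icd\,[\rho_S,\rho_E]$.

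The final step converts to Bloch vectors. The identity parts commute, so
\begin{equation}
[\rho_S,\rho_E]=\tfrac14[\vec s\cdot\vec\sigma,\vec e\cdot\vec\sigma]=\tfrac{i}{2}(\vec s\times\vec e)\cdot\vec\sigma ,
\end{equation}
using $[\sigma_a,\sigma_b]=2i\epsilon_{abc}\sigma_c$. Hence the cross term equals $-\tfrac{cd}{2}(\vec s\times\vec e)\cdot\vec\sigma$. Collecting terms and using $c^2+d^2=1$ for the identity coefficient gives $\rho_S(t)=\tfrac12[\mathbb I+\vec s(t)\cdot\vec\sigma]$ with $\vec s(t)$ as in \eqref{eq:exchange-bloch-evolution}.

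The only delicate point is the overall sign of the $\vec s\times\vec e$ term. That sign depends on getting the order in the partial-trace identity right ($BA$ versus $AB$) and on the sign convention in $U(t)$. I would guard against an error by a direct check on a simple case: take $\vec s=\hat x$ and $\vec e=\hat y$, and evaluate the $z$-component to first order in $t$. That component should equal $-2Jt$. Independently, $\dot s_z=-i\langle[\sigma_z^S,H]\rangle$ gives $\dot s_z=-2J(\vec s\times\vec e)_z$ at $t=0$, which fixes the sign.
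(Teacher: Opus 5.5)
Your proposal is correct and follows the paper's proof essentially step by step: you write $H_{\mathrm{ex}}=J(2S-\mathbb I)$, reduce $U(t)$ to $c\,\mathbb I-id\,S$ up to a global phase, expand the conjugation, and trace out $E$ term by term. The only difference is in the cross term: you use $\mathrm{Tr}_E[W(A\otimes B)]=BA$ to obtain $icd\,[\rho_S,\rho_E]$ directly, whereas the paper expands the SWAP in Pauli operators to get $\tfrac12\sum_j e_j[\sigma_j,\rho_S]$, and both routes give $-\tfrac{cd}{2}(\vec s\times\vec e)\cdot\vec\sigma$.
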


\begin{proof}
Introduce the SWAP operator
\begin{equation}
S
=
\frac12
\left(
\mathbb I+
\vec{\sigma}_S\cdot\vec{\sigma}_E
\right).
\end{equation}
This operator satisfies
\begin{equation}
S^2=\mathbb I,
\end{equation}
and exchanges the two qubits according to
\begin{equation}
S(A\otimes B)S=B\otimes A.
\end{equation}

From the definition of \(S\), we have
\begin{equation}
2S
=
\mathbb I+
\vec{\sigma}_S\cdot\vec{\sigma}_E .
\end{equation}
Therefore
\begin{equation}
\vec{\sigma}_S\cdot\vec{\sigma}_E
=
2S-\mathbb I.
\end{equation}
Hence the Hamiltonian can be written as
\begin{equation}
H_{\mathrm{ex}}
=
J(2S-\mathbb I).
\end{equation}

The unitary evolution operator is then
\begin{align}
U(t)
&=
e^{-iH_{\mathrm{ex}}t}
\nonumber\\
&=
e^{-iJ(2S-\mathbb I)t}
\nonumber\\
&=
e^{iJt}e^{-i2JtS}.
\end{align}
Since \(S^2=\mathbb I\), we obtain
\begin{equation}
e^{-i2JtS}
=
\cos(2Jt)\mathbb I
-
i\sin(2Jt)S.
\end{equation}
The global phase \(e^{iJt}\) cancels in
\(U(t)\rho_{SE}(0)U^\dagger(t)\), so it may be omitted. Thus we write
\begin{equation}
U(t)
=
c\mathbb I-idS,
\end{equation}
where
\begin{equation}
c=\cos(2Jt),
\qquad
d=\sin(2Jt).
\end{equation}

The evolved joint state is
\begin{align}
\rho_{SE}(t)
&=
U(t)(\rho_S\otimes\rho_E)U^\dagger(t)
\nonumber\\
&=
(c\mathbb I-idS)
(\rho_S\otimes\rho_E)
(c\mathbb I+idS)
\nonumber\\
&=
c^2(\rho_S\otimes\rho_E)
+
d^2S(\rho_S\otimes\rho_E)S
-
icd
\left[
S,\rho_S\otimes\rho_E
\right].
\end{align}

Using the SWAP identity,
\begin{equation}
S(\rho_S\otimes\rho_E)S
=
\rho_E\otimes\rho_S.
\end{equation}
Taking the partial trace over the environment gives
\begin{equation}
\rho_S(t)
=
c^2\rho_S
+
d^2\rho_E
-
icd\,
\operatorname{Tr}_E
\left(
[S,\rho_S\otimes\rho_E]
\right).
\label{eq:rhoS-before-comm}
\end{equation}

We now evaluate the commutator term. Since
\begin{equation}
S=
\frac12
\left(
\mathbb I+
\sum_j\sigma_j\otimes\sigma_j
\right),
\end{equation}
only the Pauli part contributes to the commutator. Hence
\begin{align}
\operatorname{Tr}_E
\left(
[S,\rho_S\otimes\rho_E]
\right)
&=
\frac12
\sum_j
\operatorname{Tr}_E
\left(
[\sigma_j\otimes\sigma_j,\rho_S\otimes\rho_E]
\right)
\nonumber\\
&=
\frac12
\sum_j
\operatorname{Tr}_E
\left(
\sigma_j\rho_S\otimes\sigma_j\rho_E
-
\rho_S\sigma_j\otimes\rho_E\sigma_j
\right).
\end{align}
Using
\begin{equation}
\operatorname{Tr}(\sigma_j\rho_E)
=
\operatorname{Tr}(\rho_E\sigma_j)
=
e_j,
\end{equation}
we get
\begin{align}
\operatorname{Tr}_E
\left(
[S,\rho_S\otimes\rho_E]
\right)
&=
\frac12
\sum_j
e_j
(\sigma_j\rho_S-\rho_S\sigma_j)
\nonumber\\
&=
\frac12
\sum_j
e_j[\sigma_j,\rho_S].
\end{align}

Now substitute
\begin{equation}
\rho_S
=
\frac12
\left(
\mathbb I+\sum_k s_k\sigma_k
\right).
\end{equation}
Then
\begin{align}
[\sigma_j,\rho_S]
&=
\frac12
\sum_k s_k[\sigma_j,\sigma_k]
\nonumber\\
&=
\frac12
\sum_k s_k
\left(
2i\sum_\ell\varepsilon_{jk\ell}\sigma_\ell
\right)
\nonumber\\
&=
i
\sum_{k,\ell}
s_k\varepsilon_{jk\ell}\sigma_\ell .
\end{align}
Therefore
\begin{align}
\operatorname{Tr}_E
\left(
[S,\rho_S\otimes\rho_E]
\right)
&=
\frac{i}{2}
\sum_{j,k,\ell}
e_j s_k \varepsilon_{jk\ell}\sigma_\ell
\nonumber\\
&=
-\frac{i}{2}
(\vec s\times\vec e)\cdot\vec\sigma .
\end{align}

Substituting this result into Eq.~\eqref{eq:rhoS-before-comm}, we obtain
\begin{align}
\rho_S(t)
&=
c^2\rho_S
+
d^2\rho_E
-
icd
\left[
-\frac{i}{2}
(\vec s\times\vec e)\cdot\vec\sigma
\right]
\nonumber\\
&=
c^2\rho_S
+
d^2\rho_E
-
\frac{cd}{2}
(\vec s\times\vec e)\cdot\vec\sigma .
\end{align}

Finally, using
\begin{equation}
\rho_S=\frac12(\mathbb I+\vec s\cdot\vec\sigma),
\qquad
\rho_E=\frac12(\mathbb I+\vec e\cdot\vec\sigma),
\end{equation}
we find
\begin{equation}
\rho_S(t)
=
\frac12
\left[
\mathbb I+
\left(
c^2\vec s
+
d^2\vec e
-
cd(\vec s\times\vec e)
\right)
\cdot\vec\sigma
\right].
\end{equation}
Therefore
\begin{equation}
\vec s(t)
=
c^2\vec s
+
d^2\vec e
-
cd(\vec s\times\vec e).
\end{equation}
This completes the proof.
\end{proof}

Taking the $z$-component of Eq.~\eqref{eq:exchange-bloch-evolution} gives
\begin{equation}
z(t)=c^2 s_z + d^2 e_z - cd\bigl(s_x e_y - s_y e_x\bigr),
\label{eq:zHeis}
\end{equation}
where $\vec{s}=(s_x,s_y,s_z)$ and $\vec{e}=(e_x,e_y,e_z)$.

\begin{lemma}
\label{lem:bloch-extrema}
Let $\vec{a}\in\mathbb{R}^3$ and $b\in\mathbb{R}$, and consider
\begin{equation}
f(\vec{e})=\vec{a}\cdot\vec{e}+b,
\qquad
|\vec{e}|\le 1.
\end{equation}
Then the extrema of $f$ over the Bloch ball occur on $|\vec{e}|=1$ and are
\begin{equation}
f_{\max/\min}=b\pm|\vec{a}|.
\end{equation}
\end{lemma}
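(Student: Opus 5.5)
The argument is a direct application of the Cauchy--Schwarz inequality on the Bloch ball, supplemented by explicit maximizers to show the bounds are attained. First we obtain the upper and lower bounds. For any $\vec e$ with $|\vec e|\le 1$, Cauchy--Schwarz gives
\begin{equation}
|\vec a\cdot\vec e|\le |\vec a|\,|\vec e|\le |\vec a|,
\end{equation}
and hence $b-|\vec a|\le f(\vec e)\le b+|\vec a|$ on the whole ball.

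Next we show both bounds are attained on the unit sphere. If $\vec a\neq 0$, take $\vec e_\pm=\pm\vec a/|\vec a|$. These satisfy $|\vec e_\pm|=1$ and $f(\vec e_\pm)=b\pm|\vec a|$. If $\vec a=0$, then $f\equiv b$ is constant, so the claimed values $b\pm 0$ are attained at every point, in particular on $|\vec e|=1$.

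To justify the statement that the extrema "occur on $|\vec e|=1$", we note that for $\vec a\neq0$ equality in Cauchy--Schwarz with $|\vec a\cdot\vec e|=|\vec a|$ forces $|\vec e|=1$ and $\vec e$ parallel to $\pm\vec a$. Therefore the extremizers are exactly $\vec e_\pm$, and no interior point of the ball is extremal. Alternatively, one can observe that a nonconstant affine function has no interior critical points. Since $f$ is continuous and the ball is compact, the image of $f$ is the full interval $[b-|\vec a|,\,b+|\vec a|]$. This interval property is what is later used to identify $\mathcal R(t)$.

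There is no real obstacle here. The only care needed is the degenerate case $\vec a=0$, where "occur on the sphere" should be read as "are attained on the sphere" rather than "only on the sphere". In the subsequent application to Eq.~\eqref{eq:zHeis}, we would take $b=c^2 s_z$ and $\vec a=\bigl(cd\,s_y,\,-cd\,s_x,\,d^2\bigr)$.
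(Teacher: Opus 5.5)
Your proof is correct and follows the paper's argument: the Cauchy--Schwarz bound $|\vec a\cdot\vec e|\le|\vec a|$ on the ball, together with the explicit extremizers $\vec e=\pm\vec a/|\vec a|$. You also go beyond the paper by treating $\vec a=0$ and by using the equality case of Cauchy--Schwarz to prove the extremizers lie on the sphere, which the paper only asserts in its remark on affine functions. One small correction to your extra claim: the image being the whole interval $[b-|\vec a|,\,b+|\vec a|]$ follows from connectedness (convexity) of the ball, e.g.\ by evaluating the affine $f$ along the segment from $\vec e_-$ to $\vec e_+$. Compactness alone only guarantees that the extrema are attained.
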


\begin{proof}
For any $\vec{e}$ with $|\vec{e}|\le 1$,
\begin{equation}
f(\vec{e})-b=\vec{a}\cdot\vec{e}
\le |\vec{a}|\,|\vec{e}|
\le |\vec{a}|,
\end{equation}
so $f(\vec{e})\le b+|\vec{a}|$. The upper bound is achieved for
$\vec{e}=\vec{a}/|\vec{a}|$ when $\vec{a}\neq 0$. A symmetric argument with
$\vec{e}=-\vec{a}/|\vec{a}|$ gives the lower bound $b-|\vec{a}|$.
\end{proof}

\noindent\textbf{Remark.}
Because $f(\vec{e})$ is affine in $\vec{e}$, its extrema over the Bloch ball
occur on the boundary $|\vec{e}|=1$. Thus pure environment states determine the extremal values, while mixed states yield intermediate values inside the interval.

\begin{corollary}
\label{cor:heisenberg-envelope}
For fixed $\vec{s}$ and $t$, all factorized initial states
$\rho_{SE}(0)=\rho_S\otimes\rho_E$ evolving under $H_{\mathrm{ex}}$ satisfy
\begin{equation}
z_{\min}(t)\le z(t)\le z_{\max}(t),
\end{equation}
where
\begin{equation}
z_{\max/\min}(t)
=
c^2s_z
\pm
|d|
\sqrt{
d^2+c^2\bigl(s_x^2+s_y^2\bigr)
}.
\label{eq:z-bounds-heisenberg}
\end{equation}
The absolute value of $d=\sin(2Jt)$ ensures that the interval remains
correctly ordered, $z_{\min}(t)\le z_{\max}(t)$, for all evolution times,
including intervals in which $d<0$.

\end{corollary}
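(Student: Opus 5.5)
The plan is to combine the $z$-component formula \eqref{eq:zHeis} from Proposition~\ref{prop:exchange-bloch} with Lemma~\ref{lem:bloch-extrema}. For fixed $\vec s$ and $t$, I will rewrite
\begin{equation*}
z(t)=c^2 s_z + d^2 e_z - cd\,(s_x e_y - s_y e_x)
\end{equation*}
as an affine function of the environment Bloch vector, $z(t)=\vec a\cdot\vec e+b$. Reading off the coefficients gives
\begin{equation*}
b=c^2 s_z,\qquad \vec a=\bigl(cd\,s_y,\,-cd\,s_x,\,d^2\bigr).
\end{equation*}
The only thing to check here is the sign bookkeeping in the cross-product term, $-cd(s_xe_y-s_ye_x)=cd\,s_y e_x - cd\,s_x e_y$.

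Next I compute the norm of $\vec a$:
\begin{equation*}
|\vec a|^2=c^2d^2(s_x^2+s_y^2)+d^4=d^2\bigl[d^2+c^2(s_x^2+s_y^2)\bigr].
\end{equation*}
Taking the square root gives $|\vec a|=|d|\sqrt{d^2+c^2(s_x^2+s_y^2)}$. The absolute value arises naturally because $\sqrt{d^2}=|d|$, and this is what guarantees $z_{\min}(t)\le z_{\max}(t)$ even when $\sin(2Jt)<0$.

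Applying Lemma~\ref{lem:bloch-extrema} over the Bloch ball $|\vec e|\le1$ then yields $b-|\vec a|\le z(t)\le b+|\vec a|$, which is exactly \eqref{eq:z-bounds-heisenberg}.

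If one also wants the full statement $\mathcal R(t)=[z_{\min},z_{\max}]$, the argument extends directly. Both endpoints are attained at pure states $\vec e=\pm\vec a/|\vec a|$, and intermediate values follow from convexity of the ball together with continuity of the affine map. The degenerate case $\vec a=0$, i.e.\ $d=0$, collapses the interval to the single point $z=s_z$, consistent with the formula.

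There is no real obstacle in this argument; the only step needing care is extracting $\vec a$ correctly with its signs. Even a sign error there would not change $|\vec a|$, so the bounds are robust.
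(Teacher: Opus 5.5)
Your proposal is correct and follows the paper's proof essentially step for step. You use the same affine decomposition $z(t)=b+\vec a\cdot\vec e$ with $b=c^2s_z$ and $\vec a=(cd\,s_y,-cd\,s_x,d^2)$, the same factorization $|\vec a|^2=d^2[d^2+c^2(s_x^2+s_y^2)]$, and the same appeal to Lemma~\ref{lem:bloch-extrema}. Your extra remarks go slightly beyond the corollary, which only needs the inclusion: attainment at $\vec e=\pm\vec a/|\vec a|$, filling of the interval by convexity, and the degenerate case $d=0$. They supply the equality $\mathcal R(t)=[z_{\min}(t),z_{\max}(t)]$ that the paper states in Theorem~\ref{thm:minimal-witness}.
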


\begin{proof}
From Eq.~\eqref{eq:zHeis} we write
\begin{equation}
z(t)=b(t)+\vec{a}(t)\cdot\vec{e},
\end{equation}
with
\begin{equation}
b(t)=c^2s_z,\qquad
\vec{a}(t)=\bigl(cd\,s_y,\;-cd\,s_x,\;d^2\bigr).
\end{equation}
The norm is
\begin{align}
|\vec{a}(t)|^2
&=
c^2d^2\bigl(s_x^2+s_y^2\bigr)+d^4
\nonumber\\
&=
d^2
\Bigl[
d^2+c^2\bigl(s_x^2+s_y^2\bigr)
\Bigr].
\end{align}
Therefore,
\begin{equation}
|\vec{a}(t)|
=
|d|
\sqrt{
d^2+c^2\bigl(s_x^2+s_y^2\bigr)
}.
\end{equation}
Applying Lemma~\ref{lem:bloch-extrema} with $b=b(t)$ and
$\vec{a}=\vec{a}(t)$ gives
\begin{equation}
\begin{aligned}
z_{\max/\min}(t)
&=
b(t)\pm|\vec{a}(t)|
\\
&=
c^2s_z
\pm
|d|
\sqrt{
d^2+c^2\bigl(s_x^2+s_y^2\bigr)
}.
\end{aligned}
\end{equation}
\end{proof}
\subsection{Geometric interpretation of the exchange dynamics}

\begin{figure*}[t]
\centering
\includegraphics[width=\linewidth]{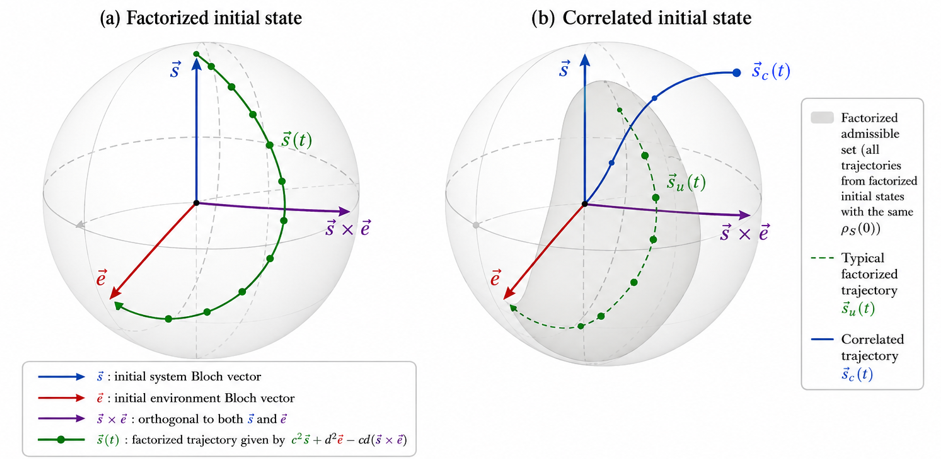}
\caption{(Color online) Geometric interpretation of the reduced dynamics under the isotropic Heisenberg exchange interaction.
(a) For factorized initial states, the reduced Bloch vector evolves according to Eq.~(\ref{eq:exchange-bloch-evolution}), $
\vec{s}(t)=c^{2}\vec{s}+d^{2}\vec{e}-cd(\vec{s}\times\vec{e})$,
where $\vec{s}$ and $\vec{e}$ denote the initial system and environment Bloch vectors, respectively. The resulting trajectory (green) is generally three-dimensional because of the exchange term $\vec{s}\times\vec{e}$.
(b) Schematic illustration of the effect of initial system--environment correlations.
The dashed green curve represents a typical factorized trajectory,
while the shaded gray region schematically represents the family of trajectories compatible with factorized initial states sharing the same calibrated reduced system state.
The solid blue curve illustrates a representative correlated trajectory that leaves this family and may violate the factorized envelope, thereby certifying initial system--environment correlations.}

\label{fig:geometry}
\end{figure*}

Equation~(\ref{eq:exchange-bloch-evolution}) admits a simple geometric
interpretation, illustrated in Fig.~\ref{fig:geometry}. For factorized initial
states, the reduced Bloch vector is given by

\begin{equation}
\vec s(t)
=
c^2\vec s
+
d^2\vec e
-
cd(\vec s\times\vec e),
\end{equation}

where the vectors
$\vec s$,
$\vec e$,
and
$\vec s\times\vec e$
are linearly independent whenever
$\vec s$
and
$\vec e$
are not parallel.

Figure~\ref{fig:geometry}(a) illustrates this evolution.
The blue and red arrows denote the initial system and environment Bloch vectors,
respectively, while the purple arrow represents the vector
$\vec s\times\vec e$, which is orthogonal to both.
The reduced Bloch vector (green curve) evolves from the initial system vector
towards the environment vector while simultaneously acquiring an orthogonal
component generated by the exchange interaction.
Consequently, the trajectory is generally not confined to the plane spanned by
$\vec s$ and $\vec e$.
Instead, it evolves within the three-dimensional subspace generated by
$\{\vec s,\vec e,\vec s\times\vec e\}$.
Only when
$\vec s\times\vec e=\mathbf0$,
that is, when the two Bloch vectors are parallel or antiparallel, does the
trajectory remain planar.

The same dynamics follow naturally from the exchange unitary

\[
U(t)
=
\cos(2Jt)\,\mathbb I
-
i\sin(2Jt)\,S,
\]

where \(S\) is the SWAP operator (up to an overall global phase).
The coefficients
$c=\cos(2Jt)$
and
$d=\sin(2Jt)$
continuously interpolate between the identity operation and a complete SWAP,
causing the reduced Bloch vector to evolve smoothly between the system and
environment Bloch vectors while generating the orthogonal exchange contribution.

Figure~\ref{fig:geometry}(b) illustrates the effect of initial
system--environment correlations.
The dashed green curve represents a typical trajectory generated by a
factorized initial state, whereas the solid blue curve represents a correlated
trajectory.
Initial correlations contribute through the correlation tensor and modify the
reduced dynamics beyond Eq.~(\ref{eq:exchange-bloch-evolution}), producing
trajectories that cannot, in general, be generated by any factorized
preparation having the same calibrated reduced state.
These deviations may lead to violations of the factorized envelope derived in
Corollary~\ref{cor:heisenberg-envelope}, providing the operational basis for
the single-observable witness introduced in this work.
\section{Witness of Initial Correlations}

We can now state the witness formally.

\begin{theorem}[Minimal-observable witness]
\label{thm:minimal-witness}
Let the system evolve under $H_{\mathrm{ex}}$ with known initial
$\rho_S(0)=\tfrac{1}{2}(\mathbb{I}+\vec{s}\cdot\vec{\sigma})$.
For each $t$, let $[z_{\min}(t),z_{\max}(t)]$ be the factorized envelope of Corollary~\ref{cor:heisenberg-envelope}. If for some $t^{*}$ the experimentally observed value $\tilde z(t^{*})$ satisfies
\begin{equation}
\tilde z(t^{*})\notin[z_{\min}(t^{*}),z_{\max}(t^{*})],
\end{equation}
then the initial joint state $\rho_{SE}(0)$ was not factorized. In particular, initial $S$-$E$ correlations are certified.
\end{theorem}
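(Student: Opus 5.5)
The proof is a contrapositive argument that reuses Proposition~\ref{prop:exchange-bloch} and Corollary~\ref{cor:heisenberg-envelope}. Suppose, for contradiction, that the initial joint state was factorized, $\rho_{SE}(0)=\rho_S(0)\otimes\rho_E(0)$. By assumption the calibration fixes the system marginal to $\rho_S(0)=\tfrac12(\mathbb I+\vec s\cdot\vec\sigma)$ with the given $\vec s$. Any qubit state can be written as $\rho_E(0)=\tfrac12(\mathbb I+\vec e\cdot\vec\sigma)$ for some $|\vec e|\le1$, so no restriction on the environment is needed. The joint evolution is $U(t)=e^{-iH_{\mathrm{ex}}t}$, so Proposition~\ref{prop:exchange-bloch} applies and the signal at every time has the form of Eq.~\eqref{eq:zHeis}.

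Next, fix $t=t^*$ and write $z(t^*)=b(t^*)+\vec a(t^*)\cdot\vec e$ as in the proof of Corollary~\ref{cor:heisenberg-envelope}. Lemma~\ref{lem:bloch-extrema} then gives $z_{\min}(t^*)\le z(t^*)\le z_{\max}(t^*)$ for every admissible $\vec e$. I would also note that the degenerate case $\vec a=0$ (for example $d=0$) is harmless: the interval then collapses to the single point $c^2 s_z$, and the inequality still holds.

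Under the idealization built into the statement, $\tilde z(t^*)$ is exactly $z(t^*)=\mathrm{Tr}[\rho_S(t^*)\sigma_z]$ generated by the true initial state. If that state were factorized, $\tilde z(t^*)$ would therefore lie in $[z_{\min}(t^*),z_{\max}(t^*)]$, which contradicts the hypothesis. Hence $\rho_{SE}(0)\neq\rho_S(0)\otimes\rho_E(0)$ for every $\rho_E(0)$. Since the marginal of $\rho_{SE}(0)$ is $\rho_S(0)$, this means $\rho_{SE}(0)$ differs from the product of its own marginals, which is exactly the definition of initial correlations in Eq.~\eqref{eq13}.

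The only delicate step is this interpretive one. One must be explicit that "observed value" means the exact expectation value, and that the same joint state is prepared in every run, so that $\tilde z(t^*)$ is a property of a single $\rho_{SE}(0)$. One must also use that the true system marginal equals the calibrated $\rho_S(0)$, because the envelope is computed only for that $\vec s$. Finite-sample statistical error is outside the statement; I would add a remark that in practice the violation must exceed the confidence interval. With those conventions fixed, the mathematics reduces entirely to the earlier corollary.
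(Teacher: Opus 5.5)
Your proposal is correct and follows the paper's own proof: assume a product state whose marginal is the calibrated $\rho_S(0)$, use Corollary~\ref{cor:heisenberg-envelope} (via Lemma~\ref{lem:bloch-extrema}) to place $z(t^*)$ inside the envelope, and conclude by contradiction. Your extra remarks make explicit assumptions that the paper leaves implicit. These are the degenerate case $\vec a=0$, the identification of $\tilde z$ with the exact expectation value, and the requirement that the true marginal match the calibration; none of them changes the argument.
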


\begin{proof}
Corollary~\ref{cor:heisenberg-envelope} shows that every factorized state
$\rho_{SE}(0)=\rho_S\otimes\rho_E$ with the given $\rho_S$ produces $z(t)$ in the interval $[z_{\min}(t),z_{\max}(t)]$. Thus
\begin{equation}
\mathcal{R}(t)
=\Bigl\{ z(t)\mid \rho_{SE}(0)=\rho_S\otimes\rho_E\Bigr\}
=[z_{\min}(t),z_{\max}(t)].
\end{equation}
If $\tilde z(t^{*})$ lies outside this interval, there is no product state consistent with the observation at $t^{*}$, so $\rho_{SE}(0)$ cannot be factorized.
\end{proof}

\section{Examples: Three Canonical Initial States}
\label{sec:examples}
In the protocol as implemented in the laboratory, only the reduced system state
$\rho_S(0)$ is calibrated experimentally; the joint state $\rho_{SE}(0)$ is
unknown and may or may not be correlated. In the following examples, by
contrast, we \emph{theoretically specify} a particular joint initial state
$\rho_{SE}(0)$ in order to analyze the performance of the witness. The
corresponding calibrated system state is always obtained as
$\rho_S(0)=\mathrm{Tr}_E[\rho_{SE}(0)]$, matching what an experimentalist would
have determined in the initial calibration step. The subsequent comparison
between the correlated trajectory $z^{(\mathrm{corr})}(t)$ and the factorized
envelope $\mathcal{R}(t)$ is therefore carried out using only system data and
the known Hamiltonian, exactly as in the operational protocol.

We now illustrate the witness using the three families of initial states analyzed in Ref.~\cite{HagenByrd2021}. In each case we:
\begin{enumerate}
\item specify the (possibly correlated) initial state $\rho_{SE}(0)$;
\item determine the calibrated reduced state $\rho_S(0)$ and its Bloch vector $\vec{s}(0)$;
\item compute the correlated trajectory $z^{(\mathrm{corr})}(t)$; and
\item compare $z^{(\mathrm{corr})}(t)$ with the factorized envelope $[z_{\min}(t),z_{\max}(t)]$.
\end{enumerate}
Whenever $z^{(\mathrm{corr})}(t)$ leaves the envelope, the witness detects initial correlations using only the single observable $z(t)$.

For convenience we use the dimensionless variable \(Jt\), so that

\[
c=\cos(2Jt),
\qquad
d=\sin(2Jt).
\]
\subsection{Example 1: Maximally entangled Bell-like state}

We first consider a maximally entangled initial state of $S$ and $E$,
\begin{equation}
\rho_{SE}(0)=\ket{\Psi}\!\bra{\Psi},\qquad
\ket{\Psi}
=\tfrac{1}{\sqrt{2}}\bigl(\ket{01}+i\ket{10}\bigr).
\end{equation}
Tracing out the environment gives a maximally mixed reduced state
\begin{equation}
\rho_S(0)=\mathrm{Tr}_E[\rho_{SE}(0)]
=\tfrac{1}{2}\,\mathbb{I},\qquad
\vec{s}(0)=\vec{0},\qquad
z(0)=0.
\end{equation}
Thus, based solely on system information at $t=0$, this preparation is
indistinguishable from an uncorrelated product state with the same $\rho_S(0)$.

\paragraph*{Reduced system trajectory.}
We use the convention
\begin{equation}
\sigma_z\ket{0}=\ket{0},
\qquad
\sigma_z\ket{1}=-\ket{1}.
\end{equation}
For the initial Bell-like state
\begin{equation}
\ket{\Psi}
=
\frac{1}{\sqrt{2}}
\left(
\ket{01}+i\ket{10}
\right)
\end{equation}
and the partial-SWAP unitary
\begin{equation}
U(t)=c\,\mathbb I-id\,S,
\qquad
c=\cos(2Jt),
\qquad
d=\sin(2Jt),
\end{equation}
the evolved joint state is
\begin{align}
\ket{\Psi(t)}
&=
U(t)\ket{\Psi}
\nonumber\\
&=
\frac{1}{\sqrt{2}}
\left[
(c+d)\ket{01}
+i(c-d)\ket{10}
\right].
\label{eq:bell-evolved}
\end{align}
Tracing out the environment therefore gives
\begin{equation}
\rho_S(t)
=
\begin{pmatrix}
p_0(t) & 0\\[4pt]
0 & 1-p_0(t)
\end{pmatrix},
\qquad
p_0(t)
=
\frac{1}{2}
\left[
1+\sin(4Jt)
\right].
\label{eq:bell-reduced}
\end{equation}
Hence the correlated single-axis observable is
\begin{align}
z^{(\mathrm{corr})}(t)
&=
\left\langle
\sigma_z^S
\right\rangle(t)
\nonumber\\
&=
p_0(t)-\left[1-p_0(t)\right]
\nonumber\\
&=
\sin(4Jt).
\label{eq:z-bell}
\end{align}
\paragraph*{Factorized comparison envelope.}
To apply the witness, we compare $z^{(\mathrm{corr})}(t)$ with the set of all
\emph{factorized} preparations that are consistent with the same calibrated
$\rho_S(0)=\mathbb{I}/2$. For any such state, the initial system Bloch vector
satisfies $\vec{s}(0)=\vec{0}$, so Eq.~\eqref{eq:zHeis} reduces to
\begin{equation}
z(t)=d^2\,e_z,\qquad d=\sin(2Jt),\qquad e_z\in[-1,1].
\end{equation}
Optimizing over $e_z$ yields the factorized envelope
\begin{equation}
z_{\min/\max}(t)=\pm\sin^2(2Jt),
\end{equation}
and therefore every product initial state must satisfy
\[
-\sin^2(2Jt)\ \le\ z(t)\ \le\ \sin^2(2Jt).
\]

\paragraph*{Witness violation and figure interpretation.}
Figure~\ref{fig:bell-envelope} compares the correlated trajectory
$z^{(\mathrm{corr})}(t)=\sin(4Jt)$ (black curve) with the factorized
envelope $[z_{\min}(t),z_{\max}(t)]$ (blue shaded region). At the witness
time
\begin{equation}
t^*=\frac{\pi}{8J},
\end{equation}
we have
\begin{equation}
\sin(4Jt^*)=1,
\qquad
\sin^2(2Jt^*)=\frac{1}{2}.
\end{equation}
Therefore,
\begin{equation}
z^{(\mathrm{corr})}(t^*)=1,
\end{equation}
whereas every factorized initial state compatible with the same calibrated
reduced state satisfies
\begin{equation}
z(t^*)\in
\left[
-\frac{1}{2},
\frac{1}{2}
\right].
\end{equation}
The red marker in Fig.~\ref{fig:bell-envelope} therefore lies above the
upper factorized bound. Since no product initial state with the same
calibrated $\rho_S(0)$ can reproduce this value, the observation certifies
initial system--environment correlations using only the single observable
$z(t)$.

\paragraph*{Experimental note.}
In practice, $z(t)=\langle\sigma_z^S\rangle(t)$ is obtained by repeatedly
preparing the system and performing a projective measurement of $\sigma_z^S$ at
time $t$, then averaging the binary outcomes. This procedure uses only a single
measurement axis and is standard in NMR, trapped ions, superconducting qubits,
and NV centers.

\begin{figure}[t]
    \centering
    \includegraphics[width=1.0\linewidth]{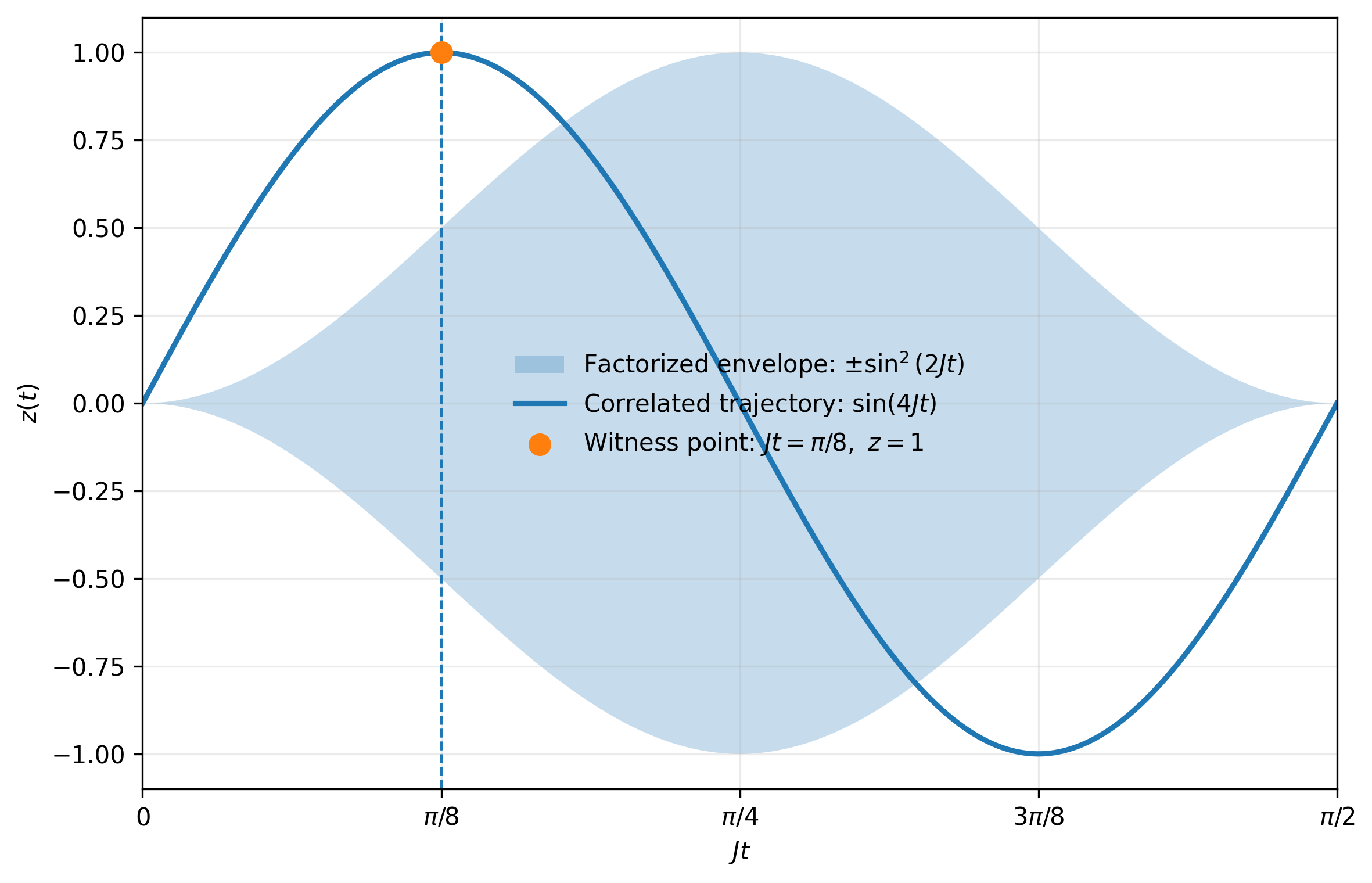}
 
    \caption{(Color online) Example~1.
    The black curve shows the corrected correlated trajectory
    $z^{(\mathrm{corr})}(t)=\sin(4Jt)$ for the maximally entangled
    initial state. The blue shaded region is the factorized envelope
    $z_{\min/\max}(t)=\pm\sin^2(2Jt)$ for all product states with
    $\rho_S(0)=\mathbb{I}/2$. At $t^*=\pi/(8J)$, indicated by the red
    marker, the correlated value is $z^{(\mathrm{corr})}(t^*)=1$,
    whereas the factorized interval is $[-1/2,1/2]$. The trajectory
    therefore lies above the upper factorized bound, certifying initial
    correlations using only $z(t)$.}
    \label{fig:bell-envelope}
\end{figure}
\subsection{Example 2: Mixture of a product state and an entangled state}

We next consider a family of mixed initial states interpolating between a product state and the maximally entangled state of Example~1:
\begin{equation}
\rho_{SE}(0)
= p\,\ket{01}\!\bra{01}
+ (1-p)\ket{\Psi}\!\bra{\Psi}, \qquad 0\le p\le 1,
\end{equation}
with $\ket{\Psi}$ as defined above. Tracing out the environment gives
\begin{equation}
\rho_S(0)
= \begin{pmatrix}
\tfrac{1+p}{2} & 0\\[4pt]
0 & \tfrac{1-p}{2}
\end{pmatrix},
\qquad \vec{s}(0)=(0,0,p),
\qquad z(0)=p,
\end{equation}
so the calibrated system state contains a finite polarization $p$ along $z$.

\paragraph*{Correlated trajectory.}
Using the singlet-triplet decomposition as before, evolving under
$H_{\mathrm{ex}}=J\,\vec{\sigma}_S\cdot\vec{\sigma}_E$ and tracing out $E$ yields the correlated single-axis observable
\begin{equation}
z^{(\mathrm{corr})}(t)
=
p\cos(4Jt)
+
(1-p)\sin(4Jt).
\label{eq:z-example2}
\end{equation}
which continuously connects to Example~1 when $p=0$.

\paragraph*{Factorized comparison envelope.}
To construct the witness, we compare $z^{(\mathrm{corr})}(t)$ with all
\emph{factorized} initial states consistent with the same calibrated
$\rho_S(0)$. For any such product state, $\vec{s}(0)=(0,0,p)$, so
Eq.~\eqref{eq:zHeis} reduces to
\begin{equation}
z(t)=c^2p+d^2e_z,\qquad e_z\in[-1,1],
\end{equation}
with $c=\cos(2Jt)$ and $d=\sin(2Jt)$. Optimizing over $e_z$ yields the envelope
\begin{equation}
z_{\min/\max}(t)=\cos^2(2Jt)\,p\pm\sin^2(2Jt),
\label{eq:envelope-ex2}
\end{equation}
and therefore every factorized initial state with the same $\rho_S(0)$ must satisfy
\[
\cos^2(2Jt)\,p - \sin^2(2Jt)\ \le\ z(t)\ \le\
\cos^2(2Jt)\,p + \sin^2(2Jt).
\]

\paragraph*{Witness violation.}
A convenient witness time is
\[
t^*=\frac{\pi}{8J},
\]
for which
\[
z^{(\mathrm{corr})}(t^*)=1-p,
\qquad
z_{\min/\max}(t^*)
=
\left[
\frac{p-1}{2},
\frac{p+1}{2}
\right].
\]
Violation occurs whenever
\[
1-p>\frac{p+1}{2},
\]
which is equivalent to
\[
p<\frac13.
\]
Thus, for
\[
0\le p<\frac13,
\]
the correlated trajectory lies above the upper factorized bound at
$t^*$, thereby certifying initial system--environment correlations using
only the single observable $z(t)$. For larger values of $p$, no
violation occurs at this particular witness time, although it may occur
at other times.

\paragraph*{Figure interpretation.}
Figure~\ref{fig:mixed-envelope} shows the correlated trajectories
$z^{(\mathrm{corr})}(t)$ (black curves) together with the corresponding
factorized envelopes (blue shaded regions) for three representative values of
$p$. The panels illustrate how decreasing $p$ increases the weight of the
entangled component and eventually produces a clear envelope crossing. For
$p=0.6$ the correlated curve remains inside the envelope at $t^*$; for $p=0.2$
and $p=0.1$ the correlated value at the red marker lies strictly outside the envelope, certifying initial correlations with a single observable.

\begin{figure*}[t]
    \centering
    \includegraphics[width=1.0\linewidth]{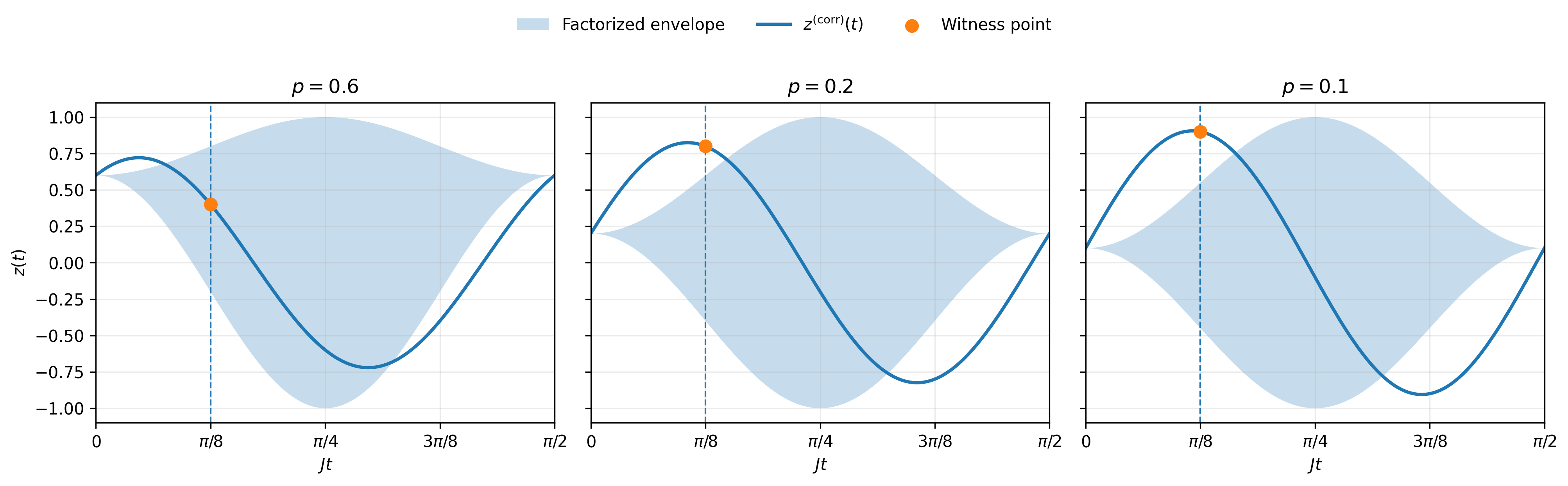}

    \caption{(Color online) Example~2.
    The blue shaded region shows the factorized envelope
    $z_{\min/\max}(t)=p\cos^{2}(2Jt)\pm\sin^{2}(2Jt)$
    for all product initial states compatible with the calibrated reduced
    state $\rho_S(0)$. The black curves show the corrected correlated
    trajectories
    $z^{(\mathrm{corr})}(t)=
    p\cos(4Jt)+(1-p)\sin(4Jt)$
    for different values of the mixing parameter $p$.
    At the witness time
    $t^{*}=\pi/(8J)$ (red marker),
    the correlated value is
    $z^{(\mathrm{corr})}(t^{*})=1-p$,
    while the corresponding factorized upper bound is
    $z_{\max}(t^{*})=(1+p)/2$.
    For $p<1/3$, the correlated trajectory exceeds the upper factorized
    bound, thereby certifying initial system--environment correlations
    using only the single observable $z(t)$.}
  
    \label{fig:mixed-envelope}
\end{figure*}
\subsection{Example 3: Mixture with the maximally mixed state}

Finally, we consider a different type of mixture in which the entangled state of
Example~1 is diluted by the maximally mixed two-qubit state:
\begin{equation}
\rho_{SE}(0)
= p\,\frac{\mathbb{I}_4}{4}
+ (1-p)\ket{\Psi}\!\bra{\Psi},
\qquad 0\le p\le 1,
\end{equation}
where $\ket{\Psi}$ is the maximally entangled state used previously. Tracing out
the environment yields
\begin{equation}
\rho_S(0)
= p\,\frac{\mathbb{I}_2}{2} + (1-p)\frac{\mathbb{I}_2}{2}
=\frac{\mathbb{I}_2}{2},
\end{equation}
so the calibrated system state is maximally mixed, with $\vec{s}(0)=\vec{0}$ and
$z(0)=0$, regardless of $p$.

\paragraph*{Correlated trajectory.}
The maximally mixed part is invariant under all unitary dynamics and therefore
does not contribute to $z(t)$. Only the entangled component evolves nontrivially,
and using the result of Example~1 we obtain

\begin{equation}
z^{(\mathrm{corr})}(t)
=
(1-p)\sin(4Jt).
\label{eq:z-example3}
\end{equation}

which continuously interpolates between the fully entangled trajectory
($p=0$) and a flat signal ($p=1$).

\paragraph*{Factorized comparison envelope.}
Since $\rho_S(0)=\mathbb{I}_2/2$ for all $p$, the factorized initial states have
$\vec{s}(0)=\vec{0}$ and the envelope reduces to the same form as in
Example~1:
\begin{equation}
z_{\min/\max}(t)=\pm\sin^2(2Jt).
\end{equation}
Therefore, every factorized initial state compatible with the same calibration
must satisfy
\[
-\sin^2(2Jt)\ \le\ z(t)\ \le\ \sin^2(2Jt).
\]

\paragraph*{Witness violation and figure interpretation.}
A violation occurs whenever
\begin{equation}
\bigl|z^{(\mathrm{corr})}(t)\bigr|
=
(1-p)|\sin(4Jt)|
>
\sin^2(2Jt).
\label{eq:example3-violation}
\end{equation}
Using
\begin{equation}
|\sin(4Jt)|
=
2|\sin(2Jt)\cos(2Jt)|,
\end{equation}
and assuming $\sin(2Jt)\neq0$, the violation condition can be written as
\begin{equation}
2(1-p)|\cos(2Jt)|
>
|\sin(2Jt)|,
\end{equation}
or equivalently,
\begin{equation}
|\tan(2Jt)|
<
2(1-p).
\label{eq:example3-condition}
\end{equation}
For every $p<1$, the right-hand side of
Eq.~\eqref{eq:example3-condition} is strictly positive. Therefore, the
condition is satisfied for sufficiently small but nonzero values of $t$.

Equivalently, near $t=0$,
\begin{equation}
z^{(\mathrm{corr})}(t)
=
(1-p)\sin(4Jt)
=
4J(1-p)t+\mathcal{O}(t^3),
\end{equation}
whereas the factorized envelope behaves as
\begin{equation}
\sin^2(2Jt)
=
4J^2t^2+\mathcal{O}(t^4).
\end{equation}
Thus, for every $p<1$, the correlated signal grows linearly in $t$ near
the origin, while the width of the factorized envelope grows
quadratically. Consequently, there always exists a sufficiently small
nonzero time at which the correlated trajectory lies outside the
factorized envelope. Although the reduced state is maximally mixed and
contains no evidence of correlations at $t=0$, later measurements of the
single observable $z(t)$ can therefore reveal the hidden initial
system--environment correlations for every $p<1$.\\
Figure~\ref{fig:werner-envelope} illustrates this behavior. The blue shaded
regions indicate the factorized envelope, while the black curves show
$z^{(\mathrm{corr})}(t)$ for three representative values of $p$. For every
$p<1$, the correlated trajectory leaves the factorized envelope at
sufficiently small but nonzero times, because the correlated signal grows
linearly near $t=0$, whereas the factorized envelope grows quadratically.
However, at the particular witness time
$t^*=\pi/(8J)$ indicated by the red marker, a violation occurs only when
$p<1/2$. Consequently, the panels with $p=0.4$ and $p=0.1$ show a violation
at the marked witness time, while the $p=0.8$ trajectory can still leave the
envelope at earlier, sufficiently small nonzero times. As $p\rightarrow1$,
the correlated signal vanishes and the initial state approaches the maximally
mixed product state, consistently eliminating the witness violation.

\begin{figure*}[t]
    \centering
    \includegraphics[width=1.0\linewidth]{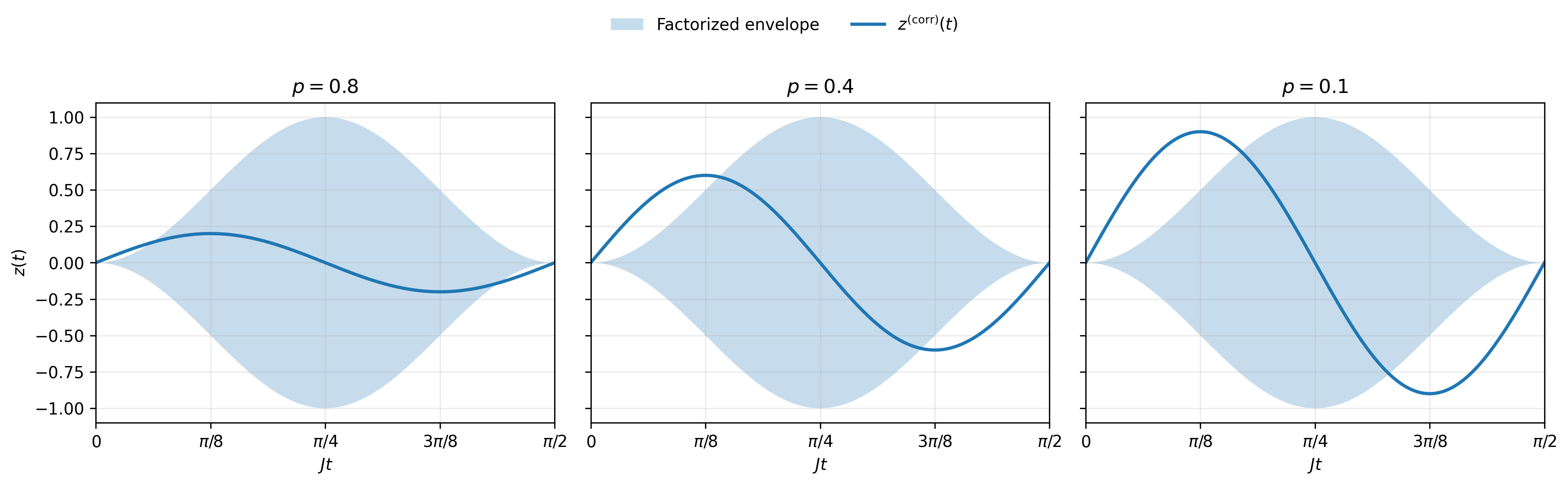}
   
    \caption{(Color online) Example~3.
    The blue shaded region shows the factorized envelope
    $z_{\min/\max}(t)=\pm\sin^{2}(2Jt)$
    for all product initial states compatible with the calibrated reduced
    state $\rho_S(0)=\mathbb{I}/2$. The black curves show the corrected
    correlated trajectories
    $z^{(\mathrm{corr})}(t)=(1-p)\sin(4Jt)$
    for different values of the mixing parameter $p$.
    At the witness time
    $t^{*}=\pi/(8J)$,
    the correlated value is
    $z^{(\mathrm{corr})}(t^{*})=1-p$,
    while every factorized initial state satisfies
    $z(t^{*})\in[-1/2,\,1/2]$.
    For the chosen witness time
$t^*=\pi/(8J)$,
the correlated trajectory exceeds the upper factorized bound whenever
$p<1/2$, thereby certifying initial system--environment correlations
using only the single observable $z(t)$.}

    \label{fig:werner-envelope}
\end{figure*}
\section{Extension to an Exactly Solvable Infinite-Bath Model}
\label{sec:dephasing}

While the isotropic Heisenberg exchange provides a minimal and fully controllable
setting for developing the single-observable witness, it is natural to ask
whether the same logic survives in more conventional open-system models with
infinite environments. In this section, we demonstrate that the witness extends
naturally to the exactly solvable pure-dephasing spin--boson model, which is a
standard benchmark in the theory of open quantum systems.

\subsection{Pure-dephasing spin--boson Hamiltonian}

We consider the Hamiltonian [Eq.~(4.28) of Ref.~\cite{BreuerPetruccione}]
\begin{equation}
H
= \frac{\omega_0}{2}\sigma_z
+ \sum_k \omega_k b_k^\dagger b_k
+ \sigma_z \sum_k \left( g_k b_k^\dagger + g_k^* b_k \right),
\label{eq:dephasing-H}
\end{equation}
where $\sigma_z$ acts on the system qubit and $\{b_k,b_k^\dagger\}$ are bosonic
operators describing an infinite environment. This model generates pure
dephasing: the system populations remain constant, while the coherences decay in
time. Importantly, the reduced dynamics admit a closed-form solution for
factorized initial states, making this model analytically tractable.

\subsection{Reduced dynamics for factorized initial states}

Assuming an initially factorized state
$\rho_{SE}(0)=\rho_S(0)\otimes\rho_E(0)$, the reduced system state at time $t$ is
given by
\begin{equation}
\rho_S(t)=
\begin{pmatrix}
\rho_{00}(0) & \rho_{01}(0)\,e^{-\Gamma(t)} \\
\rho_{10}(0)\,e^{-\Gamma(t)} & \rho_{11}(0)
\end{pmatrix},
\label{eq:dephasing-solution}
\end{equation}

where $\Gamma(t)\ge0$ is the decoherence function corresponding to the
assumed bath model (for example, a thermal bath with known spectral
density and temperature). Throughout this section, the null hypothesis is
that the initial state is factorized,
\[
\rho_{SE}(0)=\rho_S(0)\otimes\rho_E(0),
\]
the reduced system state $\rho_S(0)$ is fixed by the initial calibration,
and the bath is completely characterized through the known decoherence
function $\Gamma(t)$. Its explicit form is not required for our purposes;
we use only that $\Gamma(0)=0$ and
$e^{-\Gamma(t)}\in[0,1]$ for all $t$.

Since $\langle\sigma_z\rangle(t)$ is constant under pure dephasing, we focus on a
single transverse observable,
\begin{equation}
x(t)=\langle\sigma_x\rangle(t)
=\mathrm{Tr}_S[\rho_S(t)\sigma_x].
\end{equation}
From Eq.~\eqref{eq:dephasing-solution}, one obtains
\begin{equation}
x(t)=x(0)\,e^{-\Gamma(t)},
\label{eq:x-dephasing}
\end{equation}

where
$x(0)=\langle\sigma_x\rangle(0)$
is fixed by the initial calibration of the reduced system state.
Consequently, under the factorized null hypothesis, the complete admissible
prediction is determined jointly by the calibrated initial coherence
$x(0)$ and the known decoherence function $\Gamma(t)$.

\subsection{Factorized envelope and witness}

For every factorized initial state compatible with the calibrated reduced
system state,
\begin{equation}
|x(t)|
\le
|x(0)|\,e^{-\Gamma(t)}.
\label{eq:spinboson-envelope}
\end{equation}

This inequality defines the admissible factorized prediction for the
calibrated initial coherence. Therefore, any experimentally observed value
satisfying
\[
|x(t)|>|x(0)|e^{-\Gamma(t)}
\]
cannot arise from any factorized initial state sharing the same calibrated
reduced system state and therefore certifies initial
system--environment correlations.\\
This inequality defines a time-dependent \emph{factorized envelope} for the
single observable $x(t)$ under the pure-dephasing dynamics. It represents the
entire set of values compatible with the standard product assignment
$\rho_{SE}(0)=\rho_S(0)\otimes\rho_E(0)$ for an infinite bosonic bath.

Unlike the exchange model, the pure-dephasing dynamics admit no further
optimization over environmental degrees of freedom beyond the decoherence
function $\Gamma(t)$. Consequently, the factorized envelope takes a particularly
simple form, while any deviation from it directly reflects the presence of
initial system--environment correlations.

If, for some time $t^{*}$, an experimentally observed value satisfies
\begin{equation}
|\tilde{x}(t^{*})|
>
|x(0)|e^{-\Gamma(t^{*})},
\end{equation}
then no factorized initial state sharing the same calibrated reduced
system state and evolving under the assumed bath model can reproduce the
observation. Such a violation therefore certifies initial
system--environment correlations.\\
This example demonstrates that the single-observable witness is not restricted
to finite environments or exchange-type interactions. Even in a paradigmatic
infinite-bath model with analytically solvable reduced dynamics, factorized
initial states generate a sharply constrained admissible region for a single
system observable. Initial correlations modify the coherence dynamics and can
drive the signal outside this region, providing a clear and experimentally
accessible signature of correlated preparations.

\section{Performance Analysis of the Minimal-Observable Witness}
\label{sec:performance}

The factorized-envelope witness developed in the previous sections provides
a simple criterion for certifying initial system--environment correlations
using measurements of only a single calibrated observable.
In this section we discuss its operational performance, practical
requirements, and inherent limitations.
Rather than introducing a new detection criterion,
our goal is to clarify how the witness should be interpreted in realistic
experimental settings and to identify the situations in which it provides
its greatest advantage.

\subsection{One-Sided Nature of the Witness}

The factorized-envelope construction is intentionally designed as a
\emph{one-sided witness}.
For every calibrated reduced state $\rho_S(0)$ and known interaction
Hamiltonian, the interval
$\mathcal{R}(t)=[z_{\min}(t),z_{\max}(t)]$
contains all trajectories that can arise from factorized initial states.
Consequently,

\begin{equation}
z(t)\notin\mathcal{R}(t)
\quad\Longrightarrow\quad
\rho_{SE}(0)\neq
\rho_S(0)\otimes\rho_E(0),
\end{equation}

providing an unambiguous certification of initial
system--environment correlations.

The converse, however, does not generally hold.
A correlated initial state may still generate a trajectory lying entirely
inside the factorized envelope,
because the correlation contribution to the reduced dynamics can be
indistinguishable from that produced by an appropriate factorized
environment state.
Remaining inside the envelope therefore cannot be interpreted as evidence
that the initial state was uncorrelated.

This behavior is characteristic of witness-based methods throughout
quantum information theory.
For example, entanglement witnesses certify entanglement whenever their
inequality is violated, while many entangled states remain undetected.
Similarly, Bell inequalities provide sufficient but not necessary
conditions for nonlocality.
The present protocol should therefore be interpreted in exactly the same
spirit: violation certifies initial correlations, whereas absence of
violation simply leaves the possibility of hidden correlations open.

The advantage of this approach is that every detected event is rigorous.
Whenever the observed trajectory exits the factorized envelope, the
existence of initial correlations follows directly from the incompatibility
between the measured dynamics and every admissible factorized preparation.
No optimization, reconstruction of the reduced state, or access to the
environment is required.

\subsection{Experimental Resource Requirements}

The principal motivation for the present protocol is the reduction of
experimental overhead.
Most existing system-only detection methods require repeated state
preparation together with reconstruction of the reduced density matrix at
multiple evolution times.
Such procedures generally involve measurements of several noncommuting
observables and the subsequent implementation of state tomography.

By contrast, the present protocol requires only a one-time calibration of
the initial reduced state $\rho_S(0)$.
Once this calibration has been completed, all subsequent measurements are
performed along a single fixed axis,
namely

\begin{equation}
z(t)=
\langle\sigma_z^S\rangle(t).
\end{equation}

No reconstruction of $\rho_S(t)$ is performed during the evolution,
and no measurements on the environment are required.

The simplification comes at the cost of assuming that the interaction
Hamiltonian is known.
Knowledge of the joint dynamics makes it possible to compute the
factorized envelope analytically, which then serves as the reference model
against which experimental observations are compared.
Consequently, the protocol is particularly attractive in physical
platforms where the interaction Hamiltonian has already been accurately
characterized, including NMR systems, trapped ions, superconducting
circuits, and other highly controlled quantum devices.

The present witness therefore complements rather than replaces existing
system-only detection methods.
Protocols based on trace-distance growth or full state tomography require
fewer assumptions regarding the interaction model but demand
substantially greater measurement resources.
Conversely, our protocol exchanges detailed Hamiltonian knowledge for a
significant reduction in experimental complexity.
\begin{table}[t]
\caption{Comparison of representative approaches for detecting initial
system--environment correlations.}
\label{tab:comparison}
\centering
\scriptsize
\renewcommand{\arraystretch}{1.15}

\begin{tabular}{lccc}
\hline
Method & Ham. & Tomo. & Repeat \\
\hline
Trace distance & No & Yes & Yes\\
Tomography & No & Yes & Yes\\
This work & Yes & No$^{a}$ & No\\
\hline
\end{tabular}

\vspace{1mm}

$^{a}$After one-time calibration of $\rho_S(0)$.
\end{table}
Table~\ref{tab:comparison} highlights the complementary nature of the
present protocol.
Unlike tomography-based approaches, our method does not require repeated
reconstruction of the reduced density operator during the evolution.
Instead, after a one-time calibration of the initial reduced state, all
subsequent measurements are performed on a single fixed observable.
This simplification is achieved by assuming that the system--environment
Hamiltonian is known, allowing the factorized compatibility envelope to be
computed in advance.
The proposed witness should therefore be viewed as complementary to
existing methods rather than as a universal replacement.
It is particularly advantageous in well-characterized quantum platforms,
where accurate Hamiltonian models are available and reducing experimental
measurement overhead is a primary objective.

\subsection{Sensitivity to Measurement Uncertainty}

In practice, every measured expectation value is affected by finite
sampling statistics and experimental imperfections.
Suppose that the experimentally observed signal is

\begin{equation}
z_{\mathrm{exp}}(t)
=
z(t)+\delta z(t),
\end{equation}

where $\delta z(t)$ represents the combined contribution of statistical
fluctuations and measurement noise.

The witness remains applicable provided the experimental uncertainty is
properly incorporated into the comparison with the theoretical envelope.
Instead of comparing the measured value directly with
$\mathcal{R}(t)$,
one compares the corresponding confidence interval

\begin{equation}
z_{\mathrm{exp}}(t)
\pm
\Delta z(t)
\end{equation}

with the factorized region.
Only when the confidence interval lies entirely outside
$\mathcal{R}(t)$
can initial system--environment correlations be certified with the chosen
confidence level.

The robustness of the witness therefore depends primarily on the
separation between the observed trajectory and the factorized envelope.
Large violations remain detectable even in the presence of moderate
measurement uncertainty, whereas trajectories lying very close to the
boundary require improved statistical precision.
This behavior is common to most experimentally implemented witness
protocols and does not affect the theoretical validity of the
certification criterion.

\subsection{Scope and Limitations}

The present work establishes the compatibility-envelope construction for
two exactly solvable open-system models.
The isotropic Heisenberg exchange provides an analytically tractable
finite-dimensional benchmark in which the factorized envelope can be
derived in closed form.
The exactly solvable spin--boson model considered in the following section
demonstrates that the same single-observable logic extends naturally to an
infinite-dimensional environment.

We emphasize, however, that these models are intended to establish the
general methodology rather than to represent every physically relevant
environment.
In particular, the spin--boson example describes pure-dephasing dynamics,
whereas more general dissipative environments involving energy exchange
remain to be investigated.\\
Another important limitation follows directly from the one-sided nature of the proposed witness. Specifically, the criterion is sufficient, but not necessary, for detecting initial system--environment correlations. Whenever the observed trajectory exits the factorized compatibility envelope, the initial state cannot have been factorized, thereby providing a rigorous certification of initial correlations. Conversely, trajectories that remain within the factorized envelope do not imply that the initial state was factorized, since correlated initial states may also generate dynamics consistent with the compatibility bounds. The present protocol should therefore be interpreted as a one-sided certification tool rather than a test for the absence of initial system--environment correlations. Developing stronger witnesses capable of reducing the number of correlated states that remain undetected, while preserving the simplicity of single-observable measurements, constitutes an important direction for future work. A quantitative characterization of the detection efficiency of the present witness, namely the fraction of correlated initial states that produce observable violations of the factorized envelope, also lies beyond the scope of the present work.\\
Although every violation certifies initial system--environment
correlations, correlated states whose trajectories remain inside the
factorized envelope cannot be identified by the present protocol alone.
Developing stronger witnesses capable of reducing the number of correlated
states that remain undetected by the witness, while preserving the
simplicity of single-observable measurements, constitutes an interesting
direction for future work.
A quantitative characterization of the detection efficiency of the present
witness, namely the fraction of correlated initial states that produce
observable violations of the factorized envelope, lies beyond the scope of
the present work.
Such an analysis would require sampling correlated initial
system--environment states according to an appropriate statistical measure
and evaluating the corresponding detection probability.
This represents an interesting direction for future research and would
provide additional insight into the practical performance of the proposed
single-observable witness.
Despite these limitations, the proposed protocol demonstrates that
initial system--environment correlations can be rigorously certified
without state tomography, without repeated measurement of noncommuting
observables, and without direct access to the environment.
This substantial reduction in experimental resources makes the method an
attractive complement to existing system-only approaches for detecting
initial correlations in open quantum systems.

\section{Conclusion}

We have presented a single-observable protocol for certifying initial
system--environment correlations in open quantum systems. For a qubit
interacting with an environment through the isotropic Heisenberg exchange,
we showed that a one-time calibration of the reduced system state, together
with measurements of a single system observable, is sufficient to detect
initial correlations without requiring state tomography, multiple state
preparations, or direct access to the environment.

The central result is the derivation of an exact factorized envelope that
bounds the dynamics of all product initial states compatible with the
calibrated reduced system state. Any experimentally observed trajectory
leaving this envelope excludes every factorized preparation sharing the same
reduced state and therefore certifies the presence of initial
system--environment correlations. A geometric interpretation of the reduced
dynamics clarifies the origin of this constraint, while the examples
demonstrate that the witness successfully detects correlations for several
representative families of initial states.

The proposed protocol is inherently a one-sided certification method.
Violation of the factorized envelope provides a rigorous and experimentally
accessible witness of initial correlations, whereas remaining inside the
envelope does not exclude their presence. The protocol therefore provides a
sufficient, but not necessary, criterion for certifying correlated initial
states.

Finally, we extended the same operational idea to the exactly solvable
pure-dephasing spin--boson model with an infinite environment, where an
analogous coherence envelope can be derived. This shows that the underlying
principle is not restricted to finite-dimensional environments or
exchange-type interactions, but reflects a more general structural property
of reduced dynamics generated from factorized initial states.

Our results demonstrate that surprisingly limited measurement resources can
provide rigorous information about initial system--environment correlations.
They also suggest a broader class of single-observable certification
protocols for other interactions, higher-dimensional environments, and more
general open quantum systems, bringing the experimental detection of initial
correlations closer to practical implementation.
\appendix
\section{Deviation from Partial Swap for Correlated Initial States}\label{app:deviation}

For factorized initial states $\rho_{SE}(0)=\rho_S\otimes\rho_E$, the reduced
system Bloch vector obeys Eq.~\eqref{eq:exchange-bloch-evolution},
\begin{equation}
\vec{s}(t)=c^2\vec{s}+d^2\vec{e}-cd\,(\vec{s}\times\vec{e}),
\label{eq:pswap}
\end{equation}
which depends only on $(\vec{s},\vec{e})$ and the Heisenberg parameters $(c,d)$.\\
Equation~\eqref{eq:pswap} therefore provides a complete description of the
reduced dynamics whenever the initial state is factorized. In particular,
all trajectories belonging to the factorized envelope derived in the main
text are generated solely by varying the unknown environment Bloch vector
$\vec{e}$ while keeping the calibrated system Bloch vector $\vec{s}$ fixed.
To understand how initial system--environment correlations modify this
picture, we now consider the most general two-qubit initial state.\\
For correlated initial states, it is convenient to use the full Bloch decomposition
\begin{equation}
\rho_{SE}(0)
=\tfrac{1}{4}\Bigl(\mathbb{I}\otimes\mathbb{I}
+ \vec{s}\cdot\vec{\sigma}\otimes\mathbb{I}
+ \mathbb{I}\otimes\vec{e}\cdot\vec{\sigma}
+ \sum_{j,k} C_{jk}\,\sigma_j\otimes\sigma_k \Bigr),
\label{eq:chi}
\end{equation}
where $C_{jk}=\mathrm{Tr}[\rho_{SE}(0)\,\sigma_j\otimes\sigma_k]$ is the correlation tensor.

For product states one has
\begin{equation}
C_{jk}=s_j e_k,
\end{equation}
so we write
\begin{equation}
C_{jk}=s_j e_k+T_{jk},
\end{equation}

where
\begin{equation}
T_{jk}=C_{jk}-s_j e_k
\end{equation}
contains only the genuine system--environment correlations. By construction,
$T_{jk}=0$ for every factorized initial state, whereas correlated initial
states generally satisfy $T_{jk}\neq0$.\\

Since both the unitary evolution
\[
\rho_{SE}(t)=U(t)\rho_{SE}(0)U^\dagger(t)
\]
and the partial trace are linear operations, the reduced Bloch vector depends
linearly on the correlation tensor $T_{jk}$.\\
Evolving $\rho_{SE}(0)$ under $U(t)=e^{-iH_{\mathrm{ex}}t}$ and tracing out $E$ yields
\begin{equation}
\vec{s}(t)=\vec{s}_{\mathrm{pswap}}(t)+\vec{\delta}(t),
\end{equation}
where $\vec{s}_{\mathrm{pswap}}(t)$ is the partial-swap expression in
Eq.~\eqref{eq:pswap} (coming from the $s_j e_k$ part), and
\begin{equation}
\delta_j(t)=\sum_{k,\ell} K_{jk\ell}(t)\,T_{k\ell},
\label{eq:delta}
\end{equation}
for some real coefficients $K_{jk\ell}(t)$ determined by $H_{\mathrm{ex}}$.\\
Equation~\eqref{eq:delta} makes the physical origin of the deviation
explicit: the correction $\vec{\delta}(t)$ is entirely generated by the
correlation tensor. When $T_{jk}=0$, corresponding to an initially
factorized state, the deviation vanishes identically and the reduced
dynamics reduce exactly to the partial-SWAP expression
(\ref{eq:pswap}). Conversely, whenever $T_{jk}\neq0$, the reduced trajectory
generally departs from the partial-SWAP family and cannot, in general, be
reproduced simply by varying the unknown environment Bloch vector.\\
Our witness is precisely sensitive to such deviations: if for some time
$t^{*}$ the measured
$z(t^{*})=s_z(t^{*})$
lies outside the factorized envelope derived from
Eq.~\eqref{eq:pswap}, then necessarily $T\neq0$, and the initial state
$\rho_{SE}(0)$ cannot be written as a product.\\
Thus, the factorized envelope derived in the main text may be viewed as the
set of all trajectories generated under the constraint $T_{jk}=0$. Any
experimentally observed violation of this envelope demonstrates that a
nonzero correlation tensor is required to explain the measured dynamics,
thereby providing a direct certification of initial
system--environment correlations using only the single observable
$z(t)$.

\bibliographystyle{apsrev4-2}
\bibliography{main}

\end{document}